\providecommand{\link}[1]{[\href{#1}{link}]}
\title{A Class of Topological Pseudodistances for Fast Comparison of Persistence Diagrams}
\author{ Rolando Kindelan Nu\~nez \\
        Universidad de Chile\\
	\texttt{rkindela@dcc.uchile.cl} \\
	\And
	Mircea Petrache \\
	UC Chile \\
	\texttt{mpetrache@mat.uc.cl} \\
        \And
	Mauricio Cerda \\
	Department of Electrical Engineering\\
	Universidad de Chile \\
	\texttt{mauricio.cerda@uchile.cl} \\
        \And
	Nancy Hitschfeld \\
	Universidad de Chile \\
	\texttt{nancy@dcc.uchile.cl} \\
}
\providecommand{\norm}[1]{\lVert#1\rVert}
\newtheorem{definition}{Definition}
\newtheorem{theorem}{Theorem}
\newtheorem{prop}{Proposition}
\newtheorem{lemma}{Lemma}
\newtheorem{remark}{Remark}
\begin{document}
\maketitle

\begin{abstract}
Persistence diagrams (PD)s play a central role in topological data analysis, and are used in an ever increasing variety of applications. The comparison of PD data requires computing comparison metrics among large sets of PDs, with metrics which are accurate, theoretically sound, and fast to compute. Especially for denser multi-dimensional PDs, such comparison metrics are lacking. While on the one hand, Wasserstein-type distances have high accuracy and theoretical guarantees, they incur high computational cost. On the other hand, distances between vectorizations such as Persistence Statistics (PS)s have lower computational cost, but lack the accuracy guarantees and in general they are not guaranteed to distinguish PDs (i.e. the two PS vectors of different PDs may be equal). In this work we introduce a class of pseudodistances called Extended Topological Pseudodistances (ETD)s, which have tunable complexity, and can approximate Sliced and classical Wasserstein distances at the high-complexity extreme, while being computationally lighter and close to Persistence Statistics at the lower complexity extreme, and thus allow users to interpolate between the two metrics. We build theoretical comparisons to show how to fit our new distances at an intermediate level between persistence vectorizations and Wasserstein distances. We also experimentally verify that ETDs outperform PSs in terms of accuracy and outperform Wasserstein and Sliced Wasserstein distances in terms of computational complexity.
\end{abstract}

\keywords{Topological Pseudodistances \and Wasserstein Distances \and Machine Learning}

\section{Introduction}

The processing and extraction of information from large datasets has become increasingly challenging due to the high dimensionality and noisiness of datasets. An important toolbox for describing the shape of complex data with noise robustness bounds is offered by the emerging research field of Topological Data Analysis (TDA)~\cite{Carlsson:Bulletin, EdelsbrunnerHarer2010, Cohen-Steiner2007, Cohen-Steiner2010}, which focuses on quantifying topological and geometric shape statistics of point clouds and other datasets.\par
An important advantage of TDA compared to other methods is the improved interpretability, based on insights from algebraic topology. The principal approach to encoding topological information in TDA are Persistence Diagrams (PDs) \cite{EdelsbrunnerHarer2010, zomorodian_2005} or Persistence Barcodes \cite{Ghrist2008, Carlsson:Bulletin}. TDA methods are being applied in a growing variety of fields, including time-series analysis~\cite{timeserie_tda_2016,TDAapp2017} \cite{DBLP:journals/corr/VenkataramanRT16, umeda2017}, text data analysis~\cite{WAGNER201421, tdaworddes22}, molecular chemistry \cite{TDAapp2017}, climate understanding \cite{TDAapp2017}, atmospheric data analysis~\cite{KuhnEngelkeFlatkenetal.2017, TDAapp2017}, scientific visualization~\cite{TDAapp2017}, cosmology~\cite{TDAapp2017}, combustion simulations~\cite{TDAapp2017}, computational fluid dynamics~\cite{TDAapp2017}, neurosciences~\cite{neuroscience2018}, human motion understanding~\cite{Gait17, 7844736, DBLP:journals/corr/VenkataramanRT16}, medical applications \cite{tdaretina}, volcanic eruption analysis~\cite{KuhnEngelkeFlatkenetal.2017, TDAapp2017}.
In the above TDA applications, TDA has been used as a preprocessing stage for conventional Machine Learning (ML) algorithms \cite{DBLP:journals/informaticaSI/Skraba18}, preserving interpretability, or, more rarely, as a tool to interpret the shape of clouds manipulated via Deep Learning algorithms. The overall idea is to apply persistent homology for each sample and obtain its persistence diagram. Then the space of persistence diagrams, endowed with a suitable metric or pseudometric is used as a replacement, or as an enrichment, of the original dataset.
A problem with comparison metrics between PDs is that they are computationally expensive, especially for PDs coming from $H^j$-homology with $j>0$, which are a special type of point clouds in $\mathbb R^2$. Computing the distance between two such PDs is treated as a matching problem between points in the plane, with stability and theoretical bounds based on the link with optimal transport distances like Wasserstein Distances (WDs) \cite{wasserstein1970, Cohen-Steiner2010,statwas2019,  BUBENIK2022101882}. Since WD computation for point clouds in dimension $d\geq 2$ have $O(n^3)$-complexity \cite{Munkres1957}, where $n$ is the number of points, PD comparison is often a bottleneck in ML data processing pipelines. 

\subsection{Related Work and Main Contributions}

Several approaches have arisen to make PD comparison computationally cheaper. A first direction is to optimize the precise computation of Wasserstein distance between PD's via optimization of matching problems \cite{dey2022computational, kerberacm, chen2021approximation,backurs2020scalable}, or by resorting to computationally simpler distances, such as Sliced Wasserstein Distances (SWDs) \cite{swd-carriere17a, rabinSWD2012, BonneelSWD2015, paty_subspace_2019, bayraktar_strong_2021} which roughly speaking use as distance an average of distances of $1$-dimensional projections, or to approximations of Wasserstein Distance such as Sinkhorn Distances \cite{cuturi_sinkhorn_2013, chakrabarty2021better}. Some methods use the particular geometry of PDs specifically for computing Wasserstein distances \cite{kerberacm, Khrulkov2018GeometrySA, tamaldeyWDAppr2022}. The work \cite{swd-carriere17a} applies SWD for PD comparison, but without optimizing the method towards optimum computational gains. A second direction to overcome PD comparison difficulties, is that of introducing simplified statistics via vectorization methods, with a variety of so-called Persistent Statistics (PS) \cite{henryadamsPersistenceImages2017,2022VectorizationSurvey, Chung2022}. Then distances between PS vectorizations induce pseudodistances on the originating PDs, which while computationally faster, are not guaranteed to distinguish between distinct PD's \cite{fasy2020comparing}.

In view of the above challenges of Wasserstein distance and vectorization statistics, we introduce here a class of Extended Topology pseudodistances (ETDs) between PDs which are strictly richer than PS comparison, inspired from, but much faster to compute than SWD, and which also have significant computational gains with respect to previous WD-based approximate distances. Our main contributions are the following: 
\begin{enumerate}
\item We introduce a new class of ``enhanced topology pseudodistances" (ETDs) of increasing complexity (fixable by the user), which interpolate between simple PS vectorizations and the complexity of distances such as SWD and WD. Furthermore, we verify experimentally that for real data sets the loss is minimal at low complexity, and the distinguishing power of such ETDs is comparable to the one of Wasserstein distance between PD's in applications.
\item We build the basis for a rigorous theoretical comparison of our ETD distances to present methods for computing SWD and to commonly used PS vectorization. We also prove theoretical guarantees for stability under perturbations for our distance.
\item We test our ETDs for classification applications and experimentally compare to classical methods in terms of accuracy and of computation time.
\end{enumerate}

It is worth emphasizing that, while a theoretical framework on metric comparison for PDs is well established \cite{Cohen-Steiner2007, Cohen-Steiner2010, BUBENIK2022101882, 2022VectorizationSurvey, Chung2022}, the PD construction already discards a lot of geometric and topological information about the datasets. The question of distinguishing what tasks are suited or not suited to be tackled through PD statistics is complex, and not fully settled. In the current work, we do some steps in this direction, and we hope that more research in this direction will come in the near future.

\section{Background on PDs and Their Metrics}
\subsection{A Fast Reminder on Persistence Diagrams}
We recall basic facts about PDs, see \cite{EdelsbrunnerHarer2010} for details. For  a dataset encoding as a topological space $X$, we consider a filtration $\mathcal F=\{X_t\}_{t\in[0,T]}$ in which $X_0=X$, $X_t\subseteq X_s$ for all pairs $t\leq s$ and $X_T=X$. This filtration encodes a strategy of inspection of $X$, where the precise construction algorithms for the $X_t$ depend on the task at hand and are not relevant for us. As filtration parameter $t$ increases, topological characteristics such as connected components, loops, voids, etc. appear, disappear, split or coalesce, as determined by homology classes of increasing dimension $j=0,1,2,\dots$. For each value of $j$ the increasing set of $j$-th homology groups $H_j(X_t)$ associated to $\mathcal F$ can be encoded in the so-called \emph{persistence module} of the filtration, which in high generality (via ad-hoc structure theorems) is decomposed in a direct sum of \emph{persistence intervals}, each of which allows to determine the values of time parameter $t$ at which a given homology class appears or disappears, named \emph{birth time} $b$
and \emph{death time} $d\geq b$ of the corresponding feature. The set of pairs $(b,d)\subseteq \{(x,y):\ x\leq y\}$ for $j$-dimensional homology classes form the \emph{$j$-dimensional Persistence Diagram (PD)} $PD^j(X)$ of the space $X$. For this work we will consider a fixed dimension bound $k$, and we work with the Extended Persistence Diagram (EPD) $PD(X)= \{PD^0(X), PD^1(X), \ldots, PD^k(X)\}$, in which we reiterate that $PD^j(X)$ is a collection of points in $\mathbb R^2$ for all $0\le j\le k$.

\subsection{Wasserstein-Type Distances Between PDs}
Here we recall the important metrics of interest for comparing PDs, namely geometric distances such as WD and SWD, and distances between vectorization summaries of PDs, such as PS. Consider a fixed dimension $j\ge 0$ and the PDs for dimension $j$, denoted $\mathsf P_1=PD^j(X_1), \mathsf P_2=PD^j(X_2)\subseteq \mathbb R^2$, corresponding to two datasets $X_1, X_2$. Standard comparison and theoretical guarantees such as stability under small perturbations between PDs is uses the Bottleneck Distance (BD) (cf. \cite{EdelsbrunnerHarer2010} and \cite{chazal_gromov-hausdorff_2009,chazal_persistence_2014}), which is the $p\to\infty$ limit case of $p$-Wasserstein distances:
\begin{definition}[Wasserstein distances]
Let $\mathsf P_1, \mathsf P_2\subseteq \mathbb R^2$ as above, set $\Delta:=\{(x,x):\ x\in\mathbb R\}$ and let $\Gamma$ be the set of bijections between $\mathsf P_1\cup \Delta$ and $\mathsf P_2\cup \Delta$. Then for $p\in[1,\infty)$, the \textbf{p-Wasserstein distance} between $\mathsf P_1,\mathsf P_2$ is given by
\begin{equation}\label{eq:wasserstein_p}
    W_p(\mathsf P_1, \mathsf P_2) := \left[\inf_{\gamma \in \Gamma} \sum_{d\in \mathbb P_1 \cup \Delta} \norm{d-\gamma(d)}_\infty^p \right]^{\frac{1}{p}},
\end{equation}
and the \textbf{Bottleneck distance} between $\mathsf P_1$ and $\mathsf P_2$ is given by
\begin{equation}\label{eq:wasserstein_infty}
    W_\infty(\mathsf P_1, \mathsf P_2) := \inf_{\gamma \in \Gamma} \sup_{d\in \mathsf P_1 \cup \Delta} \norm{d-\gamma(d)}_\infty.
\end{equation}
\end{definition}
The optimal algorithm for computing $W_p$ for point clouds in dimension $d\geq 2$ is the Hungarian algorithm (see \cite{kuhn1955hungarian} and Ch. 3 of \cite{cuturipeyrebook}) with complexity $O(N^3)$ if $N$ is the number of points in $\mathsf P_1\cup \mathsf P_2$. See the below discussion on time-complexity comparison for recent approximate algorithms with lower complexity. An important observation is that things improve consistently for $1$-dimensional point clouds:
\begin{prop}\label{prop:1dwass}
    For two multisets $\mathsf P_1,\mathsf P_2\subseteq \mathbb R$ the distances $W_p(\mathsf P_1, \mathsf P_2)$ can be computed in $O(N\log N)$ time.
\end{prop}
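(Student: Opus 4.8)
The plan is to reduce the computation to sorting, exploiting the classical fact that optimal transport on the real line is solved by the monotone rearrangement. First I would reduce to the case $|\mathsf P_1|=|\mathsf P_2|=N$: for bare multisets compared by a bijection this is forced, and in the persistence setting equal cardinality is obtained by the usual augmentation of each diagram with the diagonal projections of the points of the other, which itself costs $O(N\log N)$ and so does not affect the asymptotic bound. Then sort the two multisets once, writing $a_1\le a_2\le\cdots\le a_N$ for the elements of $\mathsf P_1$ and $b_1\le b_2\le\cdots\le b_N$ for those of $\mathsf P_2$.

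The core step is to show that the order-preserving bijection $\gamma_0(a_i)=b_i$ attains the infimum in \eqref{eq:wasserstein_p} (resp. \eqref{eq:wasserstein_infty}) for every $p\in[1,\infty]$. I would argue by an exchange argument: write an arbitrary bijection as $\gamma(a_i)=b_{\sigma(i)}$ for a permutation $\sigma$; if $\sigma$ is not the identity it has an inversion, i.e. indices $i<j$ with $\sigma(i)>\sigma(j)$, so that $a_i\le a_j$ and $b_{\sigma(j)}\le b_{\sigma(i)}$. The elementary ingredient is the Monge (quadrangle) inequality for the cost $c(x,y)=|x-y|^p$,
\begin{equation*}
 |a_i-b_{\sigma(j)}|^p+|a_j-b_{\sigma(i)}|^p\ \le\ |a_i-b_{\sigma(i)}|^p+|a_j-b_{\sigma(j)}|^p,
\end{equation*}
together with its $p=\infty$ analogue (with $\max$ in place of the sum); both follow from the convexity of $t\mapsto|t|^p$ by a short majorization argument, using that each of the two differences on the left-hand side lies between the two on the right and that the two sides have equal sum of differences. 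Hence replacing the assignments $a_i\mapsto b_{\sigma(i)}$, $a_j\mapsto b_{\sigma(j)}$ by $a_i\mapsto b_{\sigma(j)}$, $a_j\mapsto b_{\sigma(i)}$ does not increase the cost of $\gamma$ while strictly decreasing the number of inversions of $\sigma$; by induction on that count, $\gamma_0$ is optimal.

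Optimality of $\gamma_0$ yields the algorithm immediately: sort $\mathsf P_1$ and $\mathsf P_2$ in $O(N\log N)$, then in a single $O(N)$ pass evaluate $W_p(\mathsf P_1,\mathsf P_2)=\bigl(\sum_{i=1}^N|a_i-b_i|^p\bigr)^{1/p}$, or $W_\infty(\mathsf P_1,\mathsf P_2)=\max_{1\le i\le N}|a_i-b_i|$ in the bottleneck case. The total cost is $O(N\log N)$, dominated by the sort.

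I expect the only real obstacle to be the rigorous proof that the monotone matching is globally optimal, i.e. the combination of the Monge inequality (a clean convexity statement, but one that must be verified uniformly in $p\in[1,\infty]$, including the $\max$ case) with the termination of the exchange process; the sorting-based algorithm and its $O(N\log N)$ complexity are then routine. A secondary point needing care, in the persistence-diagram application, is checking that the reduction to equal cardinality via diagonal projections is consistent with the definition of $W_p$ in \eqref{eq:wasserstein_p}, so that no nontrivial matching to the diagonal is lost; after that the one-dimensional argument applies verbatim.
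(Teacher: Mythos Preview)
Your proposal is correct and follows essentially the same route as the paper: both reduce the one-dimensional $W_p$ computation to sorting followed by an $\ell_p$-norm of the difference of the sorted vectors, giving the $O(N\log N)$ bound. The only difference is that the paper's proof sketch simply cites the monotone-rearrangement identity $W_p(\mathsf P_1,\mathsf P_2)=\|\mathsf{sort}(\mathsf P_1)-\mathsf{sort}(\mathsf P_2)\|_p$ from \cite[Prop.~2.16]{santambrogio2015optimal}, whereas you supply a self-contained exchange argument via the Monge inequality for $|x-y|^p$; your version is thus more elementary and independent of external references, at the cost of a little extra length.
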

\begin{proof}[Proof sketch:] 
For distributions over $\mathbb R$ we have (see e.g. \cite[Prop. 2.16]{santambrogio2015optimal}) $W_p(\mathsf P_1,\mathsf P_2)=\|\mathsf{sort}(\mathsf P_1) - \mathsf{sort}(\mathsf P_2)\|_p$, where $\mathsf{sort}(\mathsf P)$ is the vector of coordinates of points from $\mathsf P$, in non-increasing order. Assuming that the sorting operation has complexity $O(N\log N)$ and the $\ell_p$-norm is computed with $O(N)$ operations, this gives the claimed complexity bound.
\end{proof}
Note that for $0$-dimensional homology, in important cases such as for filtrations $\mathcal F$ coming from \v{C}ech or Vietoris-Rips complexes \cite[Ch. 6]{dey2022computational}, we have birth times $b=0$ by definition, and thus $PD^0$ is a $1$-dimensional point cloud. In~\cite{horak2020topology} a \emph{topology distance (TD)} was proposed for comparing the 0-dimensional part of PD's, and it improves upon earlier statistics such as Geometry Score \cite{Khrulkov2018GeometrySA} for GAN comparison. The main difference between $W_p$ and TD is that the latter is not invariant to relabelings of the points from the PD, whereas $W_p$ is.\par
Unlike dimension $0$, PD point clouds corresponding to homology groups of dimension $j>1$ are ``truly $2$-dimensional", as birth times and death times both contain nontrivial informations about the features. As explained in the below discussion on time complexity, even considering the recent improvements on approximate Wasserstein distance computation, the cost for computing geometric distances between such PDs with good approximation, is larger than the bound from Prop. \ref{prop:1dwass}.

\subsection{PD Vectorizations and Persistence Statistics}

Vectorization is the dimension reduction of PDs from point clouds in $2$ dimensions to vector data\footnote{Note that the entries of a vector are an equivalent information to point clouds over the real line, each vector entry being identified with a coordinate, thus PD vectorizations are conceptually analogous to dimension reduction from $2$ to $1$ dimensional point clouds.}. As the projection operation loses geometric information, vectorizations inherently face the tradeoff between simplicity and informativity. For a comprehensive survey of PD vectorizations see \cite{2022VectorizationSurvey}, in which a series of vectorizations are compared in benchmark ML tasks. We focus on the best performant statistic determined in the cited paper, which turns out to be the \textbf{Persistence Statistics (PS)}. For $PD^j(X)=\{(b_i,d_i):\ i\in I_j\}$ PS includes quantile, average and variance statistics about the following collections of nonnegative numbers:
\begin{equation}\label{eq:PS}
    \{b_i\}_{ i\in I_j},\ \{d_i\}_{i\in I_j},\ \{(b_i+d_i)/2\}_{i\in I_j}, \ \{d_i-b_i\}_{i\in I_j},
\end{equation}
interpreted as, respectively, the set of birth, death, midpoints and lifetime lengths of the topological features indexed by $I_j$. Besides the above, PS includes the total number of $(b_i,d_i)$ such that $d_i>b_i$, and the entropy of the multiplicity function, whose interpretation is given in \cite{chintakunta2015entropy}. 

\subsection{Extended Topology Pseudodistance}

Our new Extended Topology Pseudodistances (ETD) are defined by projecting the PDs relative to each separate dimension $j$ over a finite set of directions, and summing the $1$-dimensional $W_p$-distances of the projections. We will apply to elements $(b_i,d_i)$ from point clouds in $\mathbb R^2$ the projection onto the $\theta$-direction defined as follows, for $\theta\in[0,2\pi)$:
\[
    \pi_\theta:\mathbb R^2\to\mathbb R,\quad \pi_\theta(x,y):=x\cos \theta + y\sin\theta.
\]
\begin{remark}\label{rmk:multiset}
    As before, for $\mathsf S\subset \mathbb R^2$, we treat $\pi_\theta(\mathsf S)$ as a \textbf{multiset} and retain the multiplicity of repeated projections.
\end{remark}
\begin{remark}
    We have $\pi_{\theta+\pi}(x)=-\pi_\theta(x)$ thus the same information is encoded in the $\pi_\theta$-projections restricted to just half of the available directions, e.g. restricting to $\theta\in[0,\pi)$.
\end{remark}
The point cloud obtained by orthogonal projection of a PD $\mathsf P^j\subset \mathbb R^2$ onto the diagonal is the following:
\begin{equation}\label{eq:projdiag}
    \widetilde P^j:=\{((b+d)/2, (b+d)/2):\ (b,d)\in \mathsf P^j\}.
\end{equation}
\begin{definition}[Extended Topology Pseudodistances]\label{def:etd}
    Let $A\subset[0,\pi)$ be a finite set of projection angles and $p\in[0,\infty]$, and consider two PDs $\mathsf P_1=PD(X_1), \mathsf P_2=PD(X_2)$ with $PD(X)$ defined as in the previous sections. For $0\le j\le k$, define the auxiliary distances 
    \[
    D_j^{A}(\mathsf P_1,\mathsf P_2):=\left(\sum_{\theta \in A} W_p\left(\pi_\theta(\mathsf P_1^j\cup \widetilde P_2^j),\pi_\theta(\mathsf P_2^j\cup \widetilde P_1^j)\right)^p\right)^{1/p},
    \]
    
    where for finite sets $\mathsf S_1,\mathsf S_2\subset \mathbb R$ of equal cardinality, we set
    \[
    W_p(\mathsf S_1, \mathsf S_2):=\|\mathsf{sort}(S_1) - \mathsf{sort}(S_2)\|_p.
    \]
        
    Then the \textbf{p-Extended Topology Pseudodistance (ETD) with projection set $A$} is defined as:
    \[
    \mathsf{ETD}_A(\mathsf P_1,\mathsf P_2):=\left(\sum_{j=0}^k D_j^A(\mathsf P_1, \mathsf P_2)^p\right)^{1/p}.
    \]
    We write $\mathsf{ETD}:=\mathsf{ETD}_{A_1}$ with $A_1=\{3\pi/4\}$ and will call this distance the \textbf{basic p-ETD}.
\end{definition}
The reason why we add the sets of the form $\widetilde P_i^j$ in the definition of $D_j^A$, is for balancing: in general $\mathsf P_1^j,\mathsf P_2^j$ do not have the same cardinality, and the correct analogue of \eqref{eq:wasserstein_p} requires including diagonal sets

Note that if the filtrations producing the PDs are such that birth values are equal to zero by construction for $\mathsf P_i^0$, then these point clouds are $1$-dimensional: we then replace $D_0^A$ by $D_0(\mathsf P_1, \mathsf P_2):=\#A\cdot W_p(\pi_{\pi/2}(\mathsf P_1),\pi_{\pi/2}(\mathsf P_2))$, i.e. consider only the ``death" coordinates, with no information loss (factor $\#A$ being introduced for normalization reasons). Also note that for $\theta=3\pi/4$ we have that $\pi_\theta(\widetilde P_i^j)=\{0\}$.

We observe that PS-distances give a strictly less informative distance than $\mathsf{ETD}_A$ due to the observation contained in the following result, whose proof is a direct computation:
\begin{lemma}\label{lem:etd-ps}
    Let $j\geq 0$ and $\mathsf P^j=\{(b_1,d_1),\dots,(b_{I_j},d_{I_j})\}$ be the $j$-dimensional PD of a dataset. Then the sets from \eqref{eq:PS} are equal to, respectively:
    \[
        \pi_0(\mathsf P^j),\ \pi_{\pi/2}(\mathsf P^j),\ \tfrac{\sqrt 2}{2}\ \pi_{\pi/4}(\mathsf P^j),\ \tfrac{2}{\sqrt 2}\ \pi_{3\pi/4}(\mathsf P^j).
    \]
\end{lemma}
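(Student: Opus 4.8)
The plan is to prove \Cref{lem:etd-ps} by a direct pointwise computation: I evaluate the projection $\pi_\theta(x,y)=x\cos\theta+y\sin\theta$ at the four angles $\theta\in\{0,\pi/2,\pi/4,3\pi/4\}$, apply it to each pair $(b_i,d_i)\in\mathsf P^j$, and match the results against the four collections in \eqref{eq:PS}, being careful to track multiplicities as prescribed by Remark~\ref{rmk:multiset}.

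First I would record the relevant trigonometric values, namely $(\cos 0,\sin 0)=(1,0)$, $(\cos\tfrac\pi2,\sin\tfrac\pi2)=(0,1)$, $(\cos\tfrac\pi4,\sin\tfrac\pi4)=(\tfrac{\sqrt2}2,\tfrac{\sqrt2}2)$, and $(\cos\tfrac{3\pi}4,\sin\tfrac{3\pi}4)=(-\tfrac{\sqrt2}2,\tfrac{\sqrt2}2)$. Substituting these into the definition of $\pi_\theta$ gives, for each $i\in I_j$, the four identities $\pi_0(b_i,d_i)=b_i$, $\pi_{\pi/2}(b_i,d_i)=d_i$, $\pi_{\pi/4}(b_i,d_i)=\tfrac{\sqrt2}2(b_i+d_i)$, and $\pi_{3\pi/4}(b_i,d_i)=\tfrac{\sqrt2}2(d_i-b_i)$. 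Then I would multiply the last two by the scalar factors appearing in the statement, $\tfrac{\sqrt2}2\cdot\tfrac{\sqrt2}2(b_i+d_i)=\tfrac{b_i+d_i}2$ and $\tfrac2{\sqrt2}\cdot\tfrac{\sqrt2}2(d_i-b_i)=d_i-b_i$, which are exactly the midpoint and lifetime quantities of \eqref{eq:PS}; the first two identities already coincide with the birth and death collections. This establishes the four equalities entrywise over the index set $I_j$.

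There is essentially no obstacle here — the content is the computation above — so the only point deserving a sentence of care is the multiset bookkeeping: the projection $\pi_\theta$ and the subsequent scaling by a nonzero constant are both applied to each point of $\mathsf P^j$ independently, and scalar multiplication of a multiset is the pushforward under $x\mapsto cx$, which preserves multiplicities. Hence repeated points of $\mathsf P^j$ contribute identical copies on both sides, and the four statements hold as equalities of multisets, not merely of underlying sets. I would conclude by noting that this is precisely what is needed so that the $W_p$-comparison of projections in \Cref{def:etd} recovers the PS quantities up to the indicated rescalings.
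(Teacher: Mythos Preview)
Your proof is correct and follows essentially the same approach as the paper's own proof: a direct pointwise evaluation of $\pi_\theta(b_i,d_i)$ at the four angles followed by rescaling and comparison with \eqref{eq:PS}. Your additional remark on multiset bookkeeping is a welcome clarification, but otherwise the arguments are identical.
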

In particular, the above lemma implies that $\mathsf{ETD}_{A_4}$ distance is strictly stronger than PS for $A_4:=\{0,\pi/4,\pi/2,3\pi/4\}$. Natural choices for $A\subset[0,\pi)$ with increasing numbers of elements are
\begin{equation}\label{eq:an}
    A_n:=\left\{\frac{3\pi}{4} - \frac{i}{n}\pi \ (\mathsf{mod}\pi):\ i\in\{0,\dots,n-1\}\right\}.
\end{equation}
In the above the ``$(\mathsf{mod}\pi)$" notation means that if the number $\theta_i:=3\pi/4 - i\pi/n$ becomes negative, we replace it by $\pi-\theta_i$ instead. In Appendix B we mention a more extensive list of modifications to $\mathsf{ETD}_A$ which may be useful in applications.
As noted in the proof of Prop. \ref{prop:1dwass}, we may explicitly compute $W_p$-distances from the above definition by sorting the corresponding vectors and taking $\ell_p$-norm. 
\begin{remark}[invariance properties of $\mathsf{ETD}_A$]
    In the above definition, the input of $\mathsf{ETD}_A$ are (unordered) collections of points encoded in $\mathsf P_1^j,\mathsf P_2^j, j=0,\dots,k$. In practice, we are necessarily given the point clouds in some order, and ML tasks are required to be \emph{invariant} under reordering of the points of $\mathsf P_i^j$, for all $(i,j)\in\{1,2\}\times\{0,\dots,k\}$. A wished for property of distances, adapted to ML tasks, is to actually implement this invariance, so that successive ML processing of such distances can be done without further symmetry constraints. This invariance is automatical for $\mathsf{ETD}_A(\mathsf P_1,\mathsf P_2)$ due to invariance (under relabeling of $\mathsf P_1^j$ and of $\mathsf P_2^j$) of the intermediate quantities like $W_p(\pi_\theta(\mathsf P_1^j),\pi_\theta(\mathsf P_2^j))$. 
\end{remark}
The following computational cost bounds for ETD are proved in Appendix A:
\begin{theorem}
    [Computational cost of $\mathsf{ETD}_A$]\label{thm:computcost}
    Let $\mathsf P_1,\mathsf P_2$ be two PDs corresponding to homology dimensions $0,\dots, k$, and let $A\subset [0,2\pi)$ be a set of cardinality $a$. Then the cost of calculating $\mathsf{ETD}_A(\mathsf P_1,\mathsf P_2)$ is 
    \[
    a (T_1 + (k+1)M(3 + T_2 + log M))= O(a k M \log M),
    \]
    assuming unit cost for sum or product of real numbers, and where $T_1$ is the cost to evaluate $\sin,\cos$, $T_2$ is the cost to take $p$-th powers, and $M:=\max_{0\le j\le k}(\# \mathsf P_1^j + \#\mathsf P_2^j)$.
\end{theorem}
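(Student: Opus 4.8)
The plan is to establish the bound by a direct operation count, unfolding Definition~\ref{def:etd} from the outside in; there is no real mathematical content here, only careful bookkeeping. First I would note that computing $\mathsf{ETD}_A(\mathsf P_1,\mathsf P_2)$ splits into three nested layers: an outer aggregation over the $k+1$ homology dimensions $j\in\{0,\dots,k\}$; for each $j$, an aggregation of the $a=\#A$ summands $W_p(\pi_\theta(\mathsf P_1^j\cup\widetilde P_2^j),\pi_\theta(\mathsf P_2^j\cup\widetilde P_1^j))^p$ that make up $D_j^A$; and, for each pair $(j,\theta)$, the evaluation of one such one-dimensional Wasserstein term. Each aggregation (raising to the $p$-th power, summing, taking a $p$-th root) costs $O(a)$, resp.\ $O(k)$, operations, hence is absorbed in the bound, and the trigonometric values $\cos\theta,\sin\theta$ are computed once per angle $\theta\in A$ and reused across all dimensions, producing exactly the leading $aT_1$ term; so the work reduces to the $a(k+1)$ one-dimensional Wasserstein evaluations.

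Next I would bound the cost of one such term for fixed $(j,\theta)$. Writing $m_j:=\#\mathsf P_1^j+\#\mathsf P_2^j\le M$, and using Remark~\ref{rmk:multiset} and \eqref{eq:projdiag} to see that each $\widetilde P_i^j$ carries one point per point of $\mathsf P_i^j$, both multisets entering the term have exactly $m_j$ elements, so the equal-cardinality formula for $W_p$ applies. The substeps are: build $\widetilde P_1^j,\widetilde P_2^j$ from \eqref{eq:projdiag} in $O(m_j)$ operations; project the at most $2M$ relevant points through $\pi_\theta(x,y)=x\cos\theta+y\sin\theta$, each projection costing a bounded number of operations (two products and one sum) given the precomputed trigonometric values; sort the two resulting length-$m_j$ vectors in $O(m_j\log m_j)\le O(M\log M)$; and then, exactly as in the proof of Proposition~\ref{prop:1dwass}, subtract the sorted vectors entrywise, take absolute values, raise to the $p$-th power ($m_j$ calls at cost $T_2$ each), sum, and take a $p$-th root, for $O(m_j)$ further arithmetic. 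Collecting the substeps, one term costs at most $M(3+T_2+\log M)$ in the accounting units of the statement.

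Then I would assemble: multiplying the per-term cost by the $a(k+1)$ pairs $(j,\theta)$ and adding the $aT_1$ of the trigonometric precomputation yields $a\big(T_1+(k+1)M(3+T_2+\log M)\big)$; since $T_1$ and $T_2$ are treated as constants (and are $O(\log M)$ under standard real-arithmetic models), this simplifies to $O(akM\log M)$, as claimed.

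I do not anticipate a genuine obstacle, only a handful of bookkeeping points to get right: charging $\cos,\sin$ once per angle and not once per $(j,\theta)$ pair; checking that forming the unions with the diagonal clouds leaves both argument lists at length $m_j\le M$, so the sort stays $O(M\log M)$ rather than larger; observing that the special replacement of $D_0^A$ for zero-birth filtrations described after Definition~\ref{def:etd} only decreases the count (the relevant clouds become one-dimensional, halving the points to project); and confirming that one planar projection is a bounded number of arithmetic operations, matching the constant $3$ in the displayed bound — indeed, for the degenerate directions with $\pi_\theta(\widetilde P_i^j)=\{0\}$ those diagonal projections are free.
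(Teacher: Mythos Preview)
Your proposal is correct and follows essentially the same direct operation-count as the paper's own proof: both unfold Definition~\ref{def:etd} into the $a(k+1)$ one-dimensional $W_p$ terms, charge $\sin,\cos$ once per angle, cost each projection at $3$ arithmetic operations, bound each sort by $M\log M$, and absorb the remaining aggregations. If anything, your bookkeeping is slightly more explicit (you track the formation of the diagonal clouds $\widetilde P_i^j$ and the equal-cardinality check), but the argument is the same.
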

The $M\log M$ factor in the above estimates the complexity of sorting algorithms for $M$ real numbers. Note that implementing the sorting stage with the Trimsort algorithm allows lower complexity of $O(M)$. Trimsort is a hybrid algorithm that seamlessly blends merge sort with insertion sort. It takes advantage of the inherent structure within the data to be merged, identifying sequences of pre-sorted data and integrating them into the final list, minimizing redundant comparisons.
While on the one hand ETDs can be considered as enrichments of PS-based distances (see Prop. \ref{lem:etd-ps}), the distances $\mathsf{ETD}_A$ are theoretically connected to the Sliced Wasserstein Distance (SWD). The following is a reformulation of \cite[Def. 5.1.1]{bonnotte2013unidimensional} in our setting (see also \cite[Def. 3.1]{swd-carriere17a} which is specific for PD applications and \cite{swdPhdThesis2021} for more recent advances on SWD in general):
\begin{definition}[Sliced Wasserstein Distance]
    Let $\mathsf S_1, \mathsf S_2\subseteq \mathbb R^2$ be two finite point clouds, and let $\widetilde S_i$ be the projections as in \eqref{eq:projdiag}. Then for $p\in[1,\infty]$ the \textbf{Sliced p-Wasserstein Distance (SWD)} between them is defined as
    \begin{eqnarray*}
    \lefteqn{SW_p(\mathsf S_1,\mathsf S_2)}\\
    &:=&\left(\frac{1}{\pi}\int_{0}^\pi W_p\big(\pi_\theta(\mathsf S_1 \cup \widetilde S_2),\pi_\theta(\mathsf S_2\cup \widetilde S_1)\big)^p\ d\theta\right)^{1/p}.
    \end{eqnarray*}
\end{definition}
We see that for large $n$, the set of angles $A_n$ from \eqref{eq:an} define discretizations of $[0,\pi)$ and thus we have
\begin{equation}\label{eq:limitn}
    \lim_{n\to\infty}\frac{1}{n^{1/p}}D_j^{A_n}(\mathsf P_1,\mathsf P_2)=SW_p(\mathsf P_1^j,\mathsf P_2^j).
\end{equation}
By \cite[Thm. 5.1.5]{bonnotte2013unidimensional}, for each $p\in[1,\infty)$ there exist $c_p,C_p>0$ such that restricted to pairs $2$-dimensional point clouds $\mathsf S_1, \mathsf S_2$ included in a ball of radius $\sqrt2 T$ (which is true for $\mathsf S_1=\mathsf P_1^j\cup \widetilde P_2^j$ and $\mathsf S_2=\mathsf P_2^j\cup\widetilde P_1^j$ if we truncate persistence filtrations at parameter value $T$ as in the introduction) we get the following distance comparison with Wasserstein distance:
\begin{equation}\label{eq:sww}
   c_p SW_p\leq W_p\leq C_p T^{(p-1)/3p}\left(SW_p\right)^{1/3p}.
\end{equation}
In particular, stability properties for PDs such as those proved in \cite{atienza2020stability} for $W_p$ distances, extend via \eqref{eq:sww} for $SW_p$ as well, and via \eqref{eq:limitn} we get stability bounds in the large-$n$ limit for $\mathsf{ETD}_{A_n}$. See the discussion in Appendix A. More precise quantification of these bounds at both steps ($W_p$ bounds and control for finite $n$ in \eqref{eq:limitn}) are interesting theory questions outside the scope of this paper.
\section{Theoretical Time-Complexity Comparison}

 In Table~\ref{tab:table_time}, we present the theoretical time complexity of ETD, compared to the state-of-art computation methods including Wasserstein (WD), and Sliced-Wasserstein (SWD). The WD computes Wasserstein distance using the Scikit-tda library\cite{scikittda2019} which uses a variant of Hungarian Algorithm \cite{kuhn1955hungarian}, Python Optimal Transport (Pot WD) \cite{flamary2021pot} Wasserstein is based on \cite{wsot}, Hera WD \cite{kerberacm}, and SWD \cite{swd-carriere17a}. The Pot WD, Hera WD and SWD are implemented in the Gudhi Library \cite{gudhi2014}, which is one of the most popular TDA frameworks. The recent paper \cite{dey2022computational} also compares computational cost of many recent Wasserstein approximate algorithms. PS computation requires to compute the four vectorizations \eqref{eq:PS} for each $\mathsf P_1^j, \mathsf P_2^j$ (with same notation as in Def. \ref{def:etd}), and then to take average, variance and quantiles. The vectorization calculations have a bound of $O(M)$ for each value of $j$, and the computation of statistical quantifiers requires further $O(M)$ operations for each $j$, for a total of $O(kM)$. The actual distance calculation is of lower order, and can be included in this latter bound up to increasing the implicit constant.  

\begin{table}[tb]
\centering

\begin{tabular}{|l|l|}
\hline
\multicolumn{1}{|c|}{\textbf{Distance}} & \multicolumn{1}{c|}{\textbf{Time Complexity}} \\ \hline
$\mathsf{WD}$                                      & $O(kM^3)$                                      \\ \hline
$\mathsf{Hera WD}$                                 & $O(kM^{1.5} log M)$                                  \\ \hline
$\mathsf{SWD}$                                     & $O(k\ a\ M \log{M})$                             \\ \hline
$\mathsf{PS}$                                     & $O(k M)$                             \\ \hline
$\mathsf{ETD}$                                     & $O(kM \log{M})$                          \\ \hline
$\mathsf{ETD}_{A}$                      & $O(\#A\ k\ M \log{M})$                  \\ \hline
\end{tabular}
\caption{Time complexities, where $k$ is the number of computed homology groups, $M = \max_j(\#\mathsf P_1^j+\#\mathsf P_2^j)$ and for $\mathsf{SWD}$ the quantity $a$ is the number of slices used ($a=50$ in the original implementation). Note that even for $a=\#A$, our $\mathsf{ETD}_A$ implementation is faster than the one of $\mathsf{SWD}$ with $a$ slices, because performing projections to the diagonal via our eq.(4) and Def.2 is more efficient than via the method of \cite{swd-carriere17a}.}
\label{tab:table_time}
\end{table}

\section{Experiments With PD-Based Machine Learning Tasks}

While shedding light on the underlying scalability guarantees, it is important to note that the theoretical comparison in the previous section is not the final word for practical purposes. This is due to two issues:
\begin{enumerate}
    \item Unlike for ETD, several of the most performant other methods extra data structures such as kd-trees and graphs have to be produced before the distance is computed, and these computational overheads are not explicitly discussed. 
    \item It is important to consider practical accuracy comparisons between metrics, especially for PS and ETD type metrics which may have lower distinguishing power than WD and  SWD distances on some tasks.
\end{enumerate}
We thus perform a few experiments, for comparing ETD versus state-of-art metrics based on PS, WD and SWD, to find evidence for the above two points in typical ML tasks that use PD information as an input.

We summarize in Table \ref{tab:practice_time} a wall-clock comparison between the same metrics as in Table \ref{tab:table_time}, in two applications to PDs coming from the ML pipelines, and we compare accuracy for the tasks in Table \ref{tab:knn_acc}, and Figure \ref{fig:curves} below. 

Note that indeed as expected from above point 1., there are substantial differences between Table \ref{tab:table_time} and Table \ref{tab:practice_time}. See the next sections for precise descriptions of the considered experiments in more detail.

\subsection{Experiment 1: Supervised Learning}\label{scc:exp1}

In this experiment, we perform a common TDA+ML classification using ETD, WD-based distances, and PS, via an adaptation of the experiments from \cite{2022VectorizationSurvey} to our setting.
Recall that in ~\cite{2022VectorizationSurvey} they conduct supervised learning experiments on image classification datasets: the Outex texture database \cite{outex2002}, the SHREC14 shape retrieval dataset \cite{schrec14}, and the Fashion-MNIST database \cite{fashion2017}. The cited paper follows the conventional TDA+ML hybrid classification approach, where the dataset is transformed by computing PD associated to each sample, which are then vectorized via PS or other vectorizations, followed by a classification by a conventional classifier such as Support Vector Machines.
In an adaptation of the above experiments to our framework, we use the k-Nearest Neighbor (kNN) classifier instead of Support Vector Machines, and we use as distances the ETD, WD, SWD and HeraWD distances besides PS-based distances (with the choice of exponent $p=2$ in all cases). The classifier choice was motivated mainly by the necessity of provide ad-hoc topological kernels. In ~\cite{2022VectorizationSurvey} they produce a new feature set after providing different vectorization methods, then applied directly on the conventional SVM kernels (RBF, linear). When dealing with distance matrices, defining a theoretically sound kernel is a challenging task, outside the scope of the present work. To simplify our task, we use k-Nearest Neighbor classifier which only relies on the considered distances. We compute the corresponding distances from Table \ref{tab:table_time}, and we build the kernels without the need of passing through the vectorization stage for distances other than PS. We then compute the same metrics as ~\cite{2022VectorizationSurvey} and compare the accuracy of all methods.
Focussing on the simpler case, i.e. on the Outex pattern dataset, we reduced the number of classes to classify to $10$, and we chose $20$ samples per class (some experiments on Shrec07 and Fashion-MNIST appears in Appendix \ref{app:supervised}). We computed a Cubical Complex on each class using the Gudhi library \cite{gudhi2014}. For further details and theory of cubical complexes, please consult \cite{kaczynski2004computational} as well as the following paper \cite{Wagner2012}. We compute a distance matrix using each of the distances Basic $ETD, ETDA \text{ with } A \in \{A_2,A_4,A_8,A_{16}\}, WD, Hera WD, Pot WD, SWD, PS$. We conduct a Repeated Randomized Search \cite{bergstra2012random} to determine the best $k$ and $weight$ hyperparameters for a $k$-Nearest Neighbors classifier on each distance matrices. The experimental results are shown in Table \ref{tab:knn_acc} where the weights were omitted since using the distance weight leads to optimal accuracy. As usual, the function $h(x_q)$ which k-NN uses to assign a label to a query point $x_q$, simply assigns the most voted label among its k nearest neighbors \cite{james2023introduction}:
\begin{equation}\label{eq:knnform}
h(x_q) = \arg_{y \in Y} \max \sum_{i=1}^k w(x_q, x_i) \mathbf 1 (c(x_i), y),
\end{equation}
where $c(x_i)$ is the true class of $x_i$, $w(x_q, x_i)$ is a weight and $\mathbf 1(c(x_i), y)$ is the indicator function that equals $1$ when the $x_i$ class is equal to $y$ and $0$ otherwise. We optimize over choices of $k\leq9$, and optimal values of $k$ are shown in the second column of Table \ref{tab:knn_acc}. For $w$ we tried two possible choices: $w(x_q, x_i) = 1$ (uniform) or $w(x_q, x_i) = \frac{1}{d(x_q, x_i)}$ (distance), and as shown in the last column in Table \ref{tab:knn_acc}, in all cases, for optimum $k$ the optimum choice of $w$ was the latter. We rely on Scikit-learn \cite{scikit-learn} k-NN implementation. The computation of k nearest neighbors is highly sensitive to the chosen metric, a property which allows to compare metrics on this task.\par
We see that all methods reach high accuracy, and thus this is an example of framework in which time-effectiveness of the methods would be the relevant criterion for the choice of metric. For this experiment, the third column of Table \ref{tab:practice_time} shows average time in seconds for computing the distance between two PDs in this experiment, showing that $\mathsf{ETD}_{A_1}$ would be the optimal choice.
%
\begin{table}[tb]
\centering
\resizebox{0.5\columnwidth}{!}{%
\begin{tabular}{|c|c|c|c|}
\hline
\textbf{Distance} & \textbf{Accuracy} & \textbf{k} & \textbf{\begin{tabular}[c]{@{}c@{}}weight\\ (u: uniform\\ d: distance)\end{tabular}} \\ \hline
WD & 0.98 & 6,9 & d,d \\ \hline
SWD & 0.96 & 3,6 & d,d \\ \hline
Hera WD & 0.99 & 3,4,9 & d,d,d \\ \hline
$ETD_{A_1}$ & 0.89 & 3,4 & d,d,d \\ \hline
$ETD_{A_2}$ & 0.83 & 3,4 & d,d \\ \hline
$ETD_{A_4}$ & 0.99 & 3,6 & d,d \\ \hline
$ETD_{A_8}$ & 0.99 & 3,4 & d,d \\ \hline
$ETD_{A_{16}}$ & 0.99 & 3,4 & d,d \\ \hline
$PS$ & 0.99 & 3,4 & d,d \\ \hline
\end{tabular}%
}
\caption{kNN accuracy with some of the optimal $k$ and $w$ choices for each such $k$ (see description of \eqref{eq:knnform} for detailed description).}
\label{tab:knn_acc}
\end{table}

\subsection{Experiment 2: Autoencoder Weight Topology}\label{scc:exp2}

According to \cite{NaitzatJMLR2020}, ReLU activations have a more significant impact on the topology compared to homeomorphic activations like Tanh or Leaky ReLU. ReLU activations \emph{seem to collapse} the topology in earlier layers more rapidly.\par
On the one hand, autoencoders are neural networks that aim to minimize the distance between the original data and its reconstruction, creating both an `encoder' and `decoder' \cite{GoodBengCour16}. On the other hand, the stability theorem of persistent homology \cite{Cohen-Steiner2007}, implies that training an autoencoder to reconstruct data within a narrow margin $\epsilon>0$ leads to the persistence diagrams, representing topology, that remain in close proximity within the same $\epsilon$ value. This implies that the \emph{topology cannot be altered significantly}, even when using ReLU activations and a deep autoencoder.\par
We conduct an experiment to quantify and allow interpretation to the extent to which the quantification of these properties depend on the chosen PD metrics. As a toy example we consider data sampled from two concentric balls of radiuses $1$ and $2$ in $\mathbb{R}^{100}$ as a high-dimensional dataset with 2000 sampled points each, then train a simple autoencoder with $7$ layers (of dimensions 100-20-10-3-10-20-100 respectively). After training the autoencoder, we compute persistence diagrams (corresponding to homology dimensions $0$ and $1$, i.e. for $j=0$ and $j=1$, in our notation) on the resulting point cloud given by activation vectors of each layer. Then we create so-called \textbf{topological curves} by comparing the PD of the input dataset (1st layer) with PDs corresponding to sets of output values of successive deeper layers. The comparison is done with the different metrics considered above: ${WD}_2, Hera{WD}_2, {SWD}_2$, PS-based metric and our new distances $\mathsf{ETD}_{A_i}$ for $i=1,2,4,8,16$. The topology curves are meant to assess how much each layer changes the topology. The results depend on the chosen metric for PD comparison. Results are summarized in Figures \ref{fig:aeexp} and \ref{fig:curves}.
\begin{figure}[tb]
\centering
    \includegraphics[width=\columnwidth]{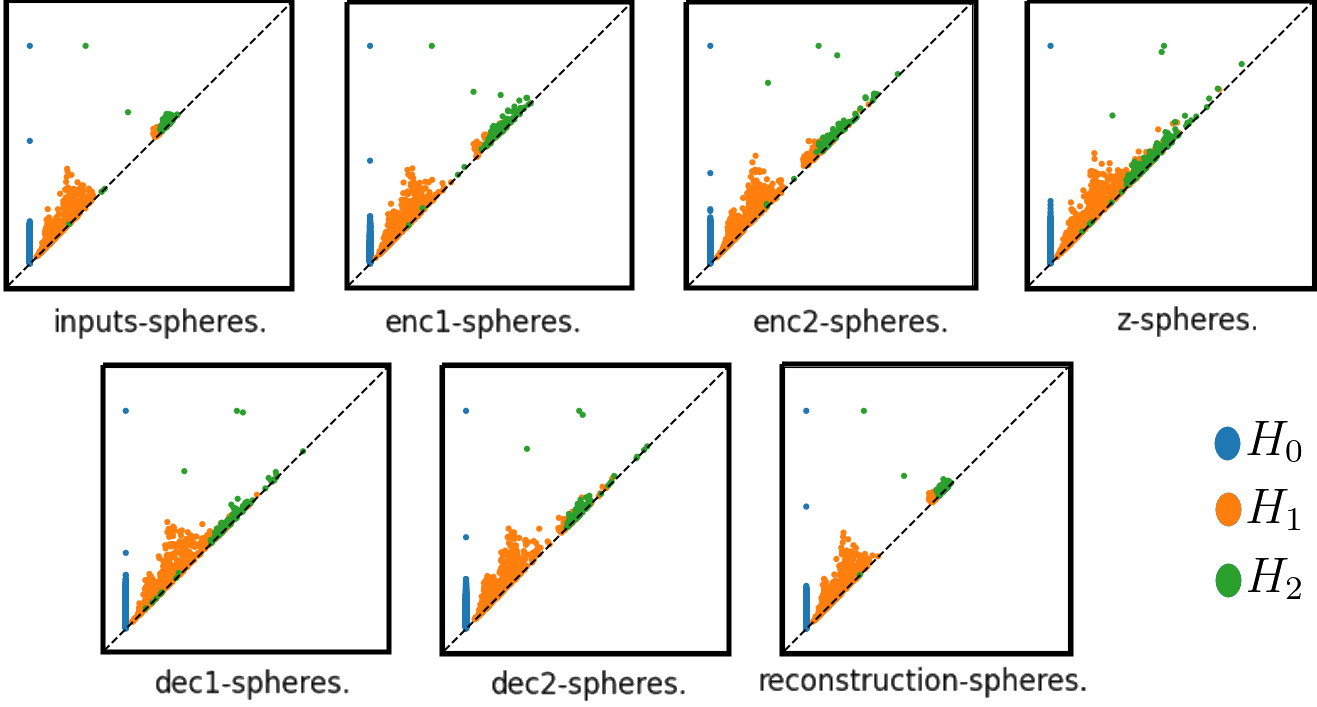}
    \caption{Example of data from Experiment 2: for each autoencoder layer, we plot the corresponding PD for $H_0, H_1,H_2$, in order from the input layer (left, first line) to the output/reconstruction layer (right, second line), for a total of $7$ layers. We plot the distance of each persistence diagram to the first one with respect to different metrics in Fig. \ref{fig:curves}.}
    \label{fig:aeexp}
\end{figure}
\begin{figure*}[tb]
 \centering
 \includegraphics[width=0.9\textwidth]{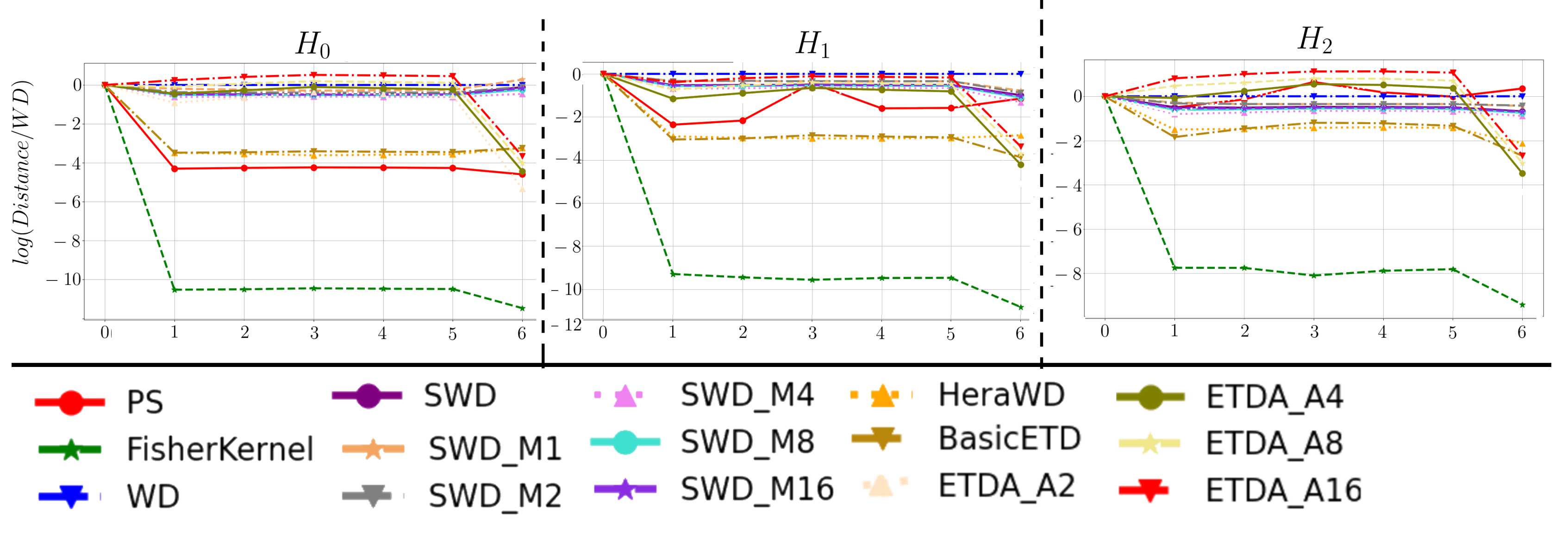}
 \caption{Example data from Experiment 2: we plot, for each homology dimension $0,1,2$, the values of $\log(\mathsf{dist}(P_i,P_0)/\mathsf{WD}(P_i,P_0)), 0\le i\le 6$ where $P_i$ is the PD of the $i$-th layer, and $\mathsf{dist}$ is amongst our allowed metrics. For completeness, we also include the Fisher Kernel distance comparison, which is much less discriminative than other metrics.}
 \label{fig:curves}
\end{figure*}
\textbf{Interpretation of the results.} Qualitative observation of topological curves as in Figure~\ref{fig:curves} across several experiments indicates that PS eliminates all variations at the level of $0$th homology group $H_0$, and introduces large variations for successive homology groups, whereas the other more precise metrics indicate lower variations. This indicates high unreliability for $PS$ metrics on qualitative tasks. We see agreement in the overall diagram shapes between $W_2$ curves, $SW_2$ curves, and $\mathsf{ETD}_{A_i}$ curves for varying values of $i$, which may be due to relative normalization factors between the metrics. Recall that as $i$ increases, in theory, due to \eqref{eq:limitn}, we expect that $\mathsf{ETD}_{A_i}$ become more accurate because it approximates SW distance more closely. We observe that the $\mathsf{ETD}_{A_i}$ curves generally diminish their oscillations as $i$ increases, with topological curve shapes similar to the one for Wasserstein distance (see Figure \ref{fig:curves}). Table \ref{tab:practice_time} shows time in seconds for computing such curves.

\begin{table}[tb]
\centering
\resizebox{0.5\columnwidth}{!}{%
\begin{tabular}{|c|lll|}
\hline
\textbf{Distance} & \multicolumn{3}{c|}{\textbf{Time in milliseconds}} \\ \cline{2-4} 
 & \multicolumn{2}{c|}{\textbf{\begin{tabular}[c]{@{}c@{}}Autoencoder\\ Weight Topology\end{tabular}}} & \multicolumn{1}{c|}{\textbf{\begin{tabular}[c]{@{}c@{}}Supervised\\  Learning\end{tabular}}} \\ \cline{2-4} 
 & \multicolumn{1}{c|}{\textbf{RELU}} & \multicolumn{1}{c|}{\textbf{LRELU}} & \multicolumn{1}{c|}{\textbf{Outex}} \\ \hline
WD & \multicolumn{1}{l|}{12544.96} & \multicolumn{1}{l|}{13626.25} & 459.24 \\ \hline
SWD & \multicolumn{1}{l|}{1588.80} & \multicolumn{1}{l|}{1551.46} & 404.09 \\ \hline
Hera WD & \multicolumn{1}{l|}{5816.86} & \multicolumn{1}{l|}{6574.51} & 864.83 \\ \hline
$ETD_{A_1}$ & \multicolumn{1}{l|}{3.69} & \multicolumn{1}{l|}{3.77} & 4.19 \\ \hline
$ETD_{A_2}$ & \multicolumn{1}{l|}{88.24} & \multicolumn{1}{l|}{72.17} & 8.55 \\ \hline
$ETD_{A_4}$ & \multicolumn{1}{l|}{118.87} & \multicolumn{1}{l|}{135.30} & 17.11 \\ \hline
$ETD_{A_8}$ & \multicolumn{1}{l|}{236.52} & \multicolumn{1}{l|}{271.12} & 34.43 \\ \hline
$ETD_{A_{16}}$ & \multicolumn{1}{l|}{469.74} & \multicolumn{1}{l|}{545.9890} & 69.11 \\ \hline
$PS$ & \multicolumn{1}{l|}{18.61} & \multicolumn{1}{l|}{17.53} & 7.51 \\ \hline
\end{tabular}%
}
\caption{Average time of each distance in milliseconds spanned by activation function and by datasets on the autoencoder and supervised learning experiments respectively.}
\label{tab:practice_time}

\end{table}

We also tested the case of $\mathsf{SWD}$ metric with a number of slices of $a=1,2,4,8,16$ versus the corresponding $ETD_{A}$ distances with $\#A=a$, and obtained that the best computational time improvement of $\mathsf{ETD}_A$ versus $\mathsf{SWD}$ is for low values of $a=\#A$: we obtain respectively an improvement by a factor of $19.4, 1.29, 1.21, 1.1, 1.1$ for the values of $a=1,2,4,8,16$ for trials of our autoencoder tasks from Experiment 2. This means that our implementation is more time-efficient than $\mathsf{SWD}$ by these factors even when applying an equal number of slices, provided this value is relatively low.

\section{Conclusion}

We have introduced a new class of distances $\mathsf{ETD}_A$ on PDs for varying small parameter set $A$. These distances on the one hand may extend the distance between vectorizations used as the basis of Persistence Statistics, and on the other hand can in theory be enriched (at the cost of increasing $A$) to approximate Sliced Wasserstein Distance between PDs. In the low-$\#A$ range considered here, $\mathsf{ETD}_A$ pseudodistance computation has theoretical complexity bounds lower than previous distances, with no additional overhead time cost (contrary to most performant WD approximations which require to construct extra data structures with an overhead to the theoretical computational cost). The cost of $\mathsf{ETD}_A$ for low number of angles $\#A$ turns out to be only marginally higher than the simpler PS-based metrics, and for $A_1$ (``basic ETD" case) it is actually substantially lower than for PS due to optimizations specific to this case (see discussion after Def. \ref{def:etd}). In practice, computational time for ETDs is considerably lower than state-of-art versions of WD or SWD distances. At the same time, in terms of accuracy loss, when tested on several common ML pipelines based on PDs, we see that ETD has higher performance than PS, and competitive accuracy performance compared to WD and SWD on ML tasks, since with ETD we reach similar qualitative description as with WD/SWD in Experiment 2, while PS-metrics seem to have unreliable qualitative behavior. Thus while having no strong theoretical guarantees of accuracy, the loss of accuracy of $\mathsf{ETD}_A$, even for for low values of $\#A$, seems to be minimal compared to finer distances such as WD or SWD distances. \par
In the comparison of our new implementation to SWD with equal numbers of slices, we find that having been careful with diagonal projections of the PD's allows notable gains for very low values of these numbers of slices (especially $a=1,2,4$) compared to SWD. These are values of major interest in our experiments, for which we find a qualitative gain of accuracy in our tasks.\par
In summary we find that $\mathsf{ETD}_A$-distances allow a compromise between low computational time and low accuracy losses, allowing to interpolate between accurate WD or SWD metrics and simple PS-based metrics, by changing an in-built complexity parameter $\#A$. Furthermore, several possibly task-specific modifications presented in Appendix B can allow further adaptation of these distances to specific tasks. We expect that the theoretical control as well as experimentation with a wider variety of tasks will be a fruitful future avenue of research.

\section*{Acknowledgements}
Rolando Kindelan Nu\~nez was supported by Beca Anid 2018/Beca doctorado Nacional-21181978, Mircea Petrache was supported by Centro Nacional de Inteligencia Artificial (CenIA) and by Fondecyt grant 1210426, Mauricio Cerda was supported by grants Fondecyt 1221696, ICN09\_015, and PIA ACT192015, Nancy Hitschfeld was supported by Fondecyt grant 1211484.

\bibliography{aaai24}

\cleardoublepage
  
\appendix
\section{Proofs of theoretical results}
\subsection{Proof of Lemma \ref{lem:etd-ps}}
It suffices to consider a point $(b_i,d_i)$: applying the projections from the statement of the lemma, we get
\begin{eqnarray*}
    \pi_0(b_i,d_i)&=&b_i,\\ 
    \pi_{\pi/2}(b_i,d_i)&=&d_i,\\ 
    \pi_{\pi/4}(b_i,d_i)&=&\frac{\sqrt2}{2}(b_i+d_i),\\
    \pi_{3\pi/4}(b_i,d_i)&=&\frac{\sqrt2}{2}(d_i-b_i).
\end{eqnarray*}
Then the proof of the lemma follows by direct rescaling and comparison to \eqref{eq:PS}.
\subsection{Proof of Theorem \ref{thm:computcost}}
In order to compute $\mathsf{ETD}_A(\mathsf P_1,\mathsf P_2)$ we need to calculate the $k+1$ distances $D_j^A(\mathsf P_1^j,\mathsf P_2^j)$ and for each of them we need to compute $\#A$ terms of the form $W_p(\pi_\theta(\mathsf P_1^j\cup \widetilde P_2^j),\pi_\theta(\mathsf P_2^j\cup\widetilde P_1^j))$. For each of these terms, the cardinality of $\mathsf P_1^j\cup \widetilde P_2^j$ and $\mathsf P_2^j\cup\widetilde P_1^j$ is bounded above by $M$ and the computation of $\pi_\theta$ then requires to compute $\sin\theta, \cos\theta$ once, and then the computation of at most $M$ terms of the form $b_i \cos \theta + d_i \sin\theta$, requiring $3$ operation each. The calculation of $W_p$ using the formula given in Def. \ref{def:etd}, requires to sort vectors of cardinality at most $M$, which takes at most $M\log M$ operations, and then to calculate the power $p$ of the $\ell_p$-norm of the difference, which takes $M$ operations of taking $p$-th power and $M-1$ operation of sum. Then we require $k(\sharp A-1)$ more sum operations and an operation of $1/p$-th power. Summing all these operations gives the first claimed bound from the theorem's statement. As $T_1, T_2$ are fixed constants, we directly get the $O(\cdots)$-based bound.

\subsection{Justification of Equation \texorpdfstring{\eqref{eq:limitn}}{xxi}}
We observe that, from Def. \ref{def:etd}, we get
\[
    \frac{1}{n} D_j^{A_n}(\mathsf P_1,\mathsf P_2)^p=\frac{1}{n}\sum_{i=0}^{n-1}W_p(\pi_{\theta_i}(\mathsf S_1), \pi_{\theta_i}(\mathsf S_2))^p.
\]
In the above, we denoted for simplicity $\mathsf S_1:=\mathsf P_1^j\cup \widetilde P_2^j$ and $\mathsf S_2:=\mathsf P_2^j\cup \widetilde P_1^j$, and also $A_n=\{\theta_i:\ 0\le i<n\}$. Note that due to the definition \eqref{eq:an} of $A_n$ the values $\theta_i$ are $n$ equally spaced points in the interval $[0,\pi)$. It is then direct to check that the last displayed formula is actually a Riemann sum approximation of
\[
    \frac{1}{\pi}\int_0^\pi W_p(\pi_\theta(\mathsf S_1),\pi_\theta(\mathsf S_2))^p d\theta = SW_p(\mathsf P_1^j,\mathsf P_2^j)^p,
\]
and the classical Riemann integral approximation theorem gives the desired result.

\subsection{Details about stability control with \texorpdfstring{$\mathsf{ETD}_{A_n}$}{ETD-An}}
The classical result \cite{chazal_gromov-hausdorff_2009, cohen2010lipschitz} is that if for $X_1,X_2$ finite metric spaces, the $j$-dimensional PDs $\mathsf P_1,\mathsf P_2$ are obtained from the Rips complex, then 
\[
    W_\infty(\mathsf P_1,\mathsf P_2) \leq GH(X_1,X_2),
\]
in which $GH$ is the Gromov-Hausdorff distance (see e.g. \cite[Ch.7]{burago2022course}). Next, from e.g. \cite[Lem. 3.5]{atienza2020stability} we have the observation, based on a similar result for $\ell_p,\ell_\infty$ norms, that $W_p$ is controlled by $W_\infty$, however this is comes at the cost of including a dependence on $M:=\#\mathsf P_1+\#\mathsf P_2$, for the most interestig lower bound (we include the upper bound for completenes but we do not use it):
\[
    M^{-1/p}W_p(\mathsf P_1,\mathsf P_2)\leq W_\infty(\mathsf P_1,\mathsf P_2)\leq W_p(\mathsf P_1,\mathsf P_2).
\]
Using \eqref{eq:sww} in conjunction with the above lower bounds, we then obtain a stability control on $SW_p$ as well, and via \eqref{eq:limitn} we obtain bounds for $\mathsf{ETD}_{A_n}$ for large $n$. Note that at the moment these bounds are somehow weak, due to the limit in \eqref{eq:limitn} and to the dependence on $M$ in the above $W_p / W_\infty$ bound. Thus we focused on experimental comparisons of $\mathsf{ETD}_A$ for small values of $\#A$, and leave this theoretical comparison to future work.

\section{Possible ad-hoc modifications for ETDs}
In applications, there may be a series of minor modifications to the initial ETD definition (Def. \ref{def:etd}) that users may want to implement for specific tasks. We list a list of such possible modifications here.
\subsection{Dimensional reweighting}
It may be useful to include ad-hoc normalization weights $w=(w_0,\dots,w_k)$ with $w_j>0$, for the auxiliary distances $D_j^A, j=0,\dots,k$ in Def. \ref{def:etd}, and define 
\[
    \mathsf{ETD}_A^w(\mathsf P_1,\mathsf P_2):= \left(\sum_{j=0}^k w_j D_j^A(\mathsf P_1^j, \mathsf P_2^j)^p \right)^{1/p}.
\]
The usefulness of such weights (which can be hyperparameters or learned parameters in a ML task) depends of relative relevance of different types of topological features for given tasks.

\subsection{Dimension-dependent angle choices}
It may be useful to let the choice of angle sets $A$ depend upon dimension $j$, rather than being fixed as in Def. \ref{def:etd}. This means that we fix sets $A_0,\dots,A_j\subset [0,2\pi)$ and in Def. \ref{def:etd} replace $D_j^A$ by 
\[
    D_j^{A_j}(\mathsf P_1^j,\mathsf P_2^j):=\left(\sum_{\theta\in A_j}W_p(\pi_\theta(\mathsf P_1^j),\pi_\theta(\mathsf P_2^j))^p\right)^{1/p}.
\]
A special case of these changes is mentioned immediately after Def. \ref{def:etd}: if $\mathsf P^0$ has by construction birth coordinates $b=0$, then it is useful to only consider the death coordinates $d$, i.e. to use $A_1=\{\pi/2\}$. The natural choices for $A_j$ are given in \eqref{eq:an}.

\subsection{Randomization}
A common strategy for dimension reduction endeavors such as the passage from WD to SWD often profit from the use of randomized projections, akin to the framework of the Johnson-Lindenstrauss theorem. This has been implemented in \cite{deshpande2019max, nietert2022statistical,nadjahi_fast_2021} for classical SWD optimization for point clouds in large dimension. Choosing randomized projection for the dimension reduction from $2$ to $1$ dimensional multisets in our framework, will give minimal computational time gains compared no fixing the $A_n$ as above. Furthermore, theoretical convergence rate in the approximation to $SW_p$ like \eqref{eq:limitn} is reduced for randomized $A_n$.

\subsection{1-dimensional Wasserstein alternatives with general metrics replacing \texorpdfstring{$\ell_p$}{Lp}-norm}

In some applications such as word2vec vectorizations, other metrics for comparing vectorizations are used, besides classical Banach space norms. This can be adapted to the present framework. For example, to compare accuracy values for $A=A_n$ as in \eqref{eq:an}, besides using $EDT_A$ with based on $W_p(\mathsf S_1,\mathsf S_2)=\|\mathsf{sort}(\mathsf S_1) - \mathsf{sort}(\mathsf S_2)\|_{\ell_p}$ as in Definition \ref{def:etd}, one may try out different distances instead of $\ell_p$, in order to compare the vectors $V_i:=\mathsf{sort}(\mathsf S_i)$, for example cosine similarity corresponding to taking a scalar product of the vectors:
    \[
        d_{\cos}(V_1,V_2):=\left|\sum_i (V_1)_i (V_2)_i\right|.
    \]
We sum these distances over $\theta\in A_n$ as in Definition \ref{def:etd} as an alternative to our $\mathsf{EDT}_{A_n}$. While the mathematical grounding for these alternative definitions is less strong than for the canonical version in Def. \ref{def:etd}, it may be worth to try such alternatives for experimental evaluation on ad-hoc tasks.
\section{Implementation details}

\begin{multicols}{2}
\begin{algorithm}[H]
    
    \caption{$get\_L\_vector$: Computes the $\mathcal{L}$ vector.}
    \label{alg:get_L_vector}
    \textbf{Input}: A persistence diagram $\mathsf P$.\\
    \textbf{Output}: The longevity vector $L$ of $\mathsf P$.
    
    \begin{algorithmic}[1]     
    \STATE $k \leftarrow |\mathsf P|$, $L \leftarrow ()$,  $j \leftarrow 0$
    
    \FOR{$j  \leq k$}
        \STATE $L_j \leftarrow ()$
        \FOR{$(b,d)  \leq \mathsf P^j$}
            \STATE $L_j.add(d - b)$
        \ENDFOR 
        
        \STATE $L_j \leftarrow \mathsf{sort}(L_j)$ 
        \STATE $L.add(L_j)$ 
        \STATE $j \leftarrow j + 1$
    \ENDFOR
    \RETURN $L$
    \end{algorithmic}
\end{algorithm}
 
\begin{algorithm}[H]
    
    \caption{$get\_V\_vector$: Computes the $\mathcal{L}$ vector.}
    \label{alg:get_V_vector}
    \textbf{Input}: A persistence diagram $\mathsf P$ and an angle $\alpha$.\\
    \textbf{Output}: The $V$ vector of $\mathsf P$ associated to $\alpha$.
    
    \begin{algorithmic}[1]     
    \STATE $k \leftarrow |\mathsf P|$, $V \leftarrow ()$, $j \leftarrow 0$
    \STATE $(b_\alpha, d_\alpha) \leftarrow (cos(\alpha), sin(\alpha)) $
    
    \FOR{$j  \leq k$}
        \STATE $V_j \leftarrow ()$
        \FOR{$(b,d)  \leq \mathsf P^j$}
            \STATE $v_j \leftarrow b \cdot b_\alpha + d \cdot d_\alpha$    
            \STATE $V_j.add(v_j)$
        \ENDFOR 
        
        \STATE $V_j \leftarrow \mathsf{sort}(V_j)$ 
        \STATE $V.add(V_j)$ 
        \STATE $j \leftarrow j + 1$
    \ENDFOR   
    \RETURN $V$
    \end{algorithmic}
\end{algorithm}
\end{multicols}

\begin{algorithm}[H]
    
    \caption{$BasicETD$: Computes the $ETD_{A_1}$ pseudodistance.}
    \label{alg:get_basic_etd}
    \textbf{Input}: Two persistence diagrams $\mathsf P_1, \mathsf P_2$ and a power value $p$.\\
    \textbf{Output}: The basic ETD pseudodistance between $\mathsf P_1, \mathsf P_2$.
    
    \begin{algorithmic}[1]     
    \STATE $L_1 \leftarrow get\_L\_vector(\mathsf P_1)$
    \STATE $L_2 \leftarrow get\_L\_vector(\mathsf P_2)$
    \STATE $L_1',L_2' \leftarrow padding\_zeroes(L_1, L_2)$ 
    \STATE $k \leftarrow |L_1'|$
    \RETURN $\norm{L_1' -L_2'}_p$
    \end{algorithmic}
\end{algorithm}

\begin{algorithm}[H]
    
    \caption{$ETD_A$: Computes the $ETD_A$ pseudodistance.}
    \label{alg:get_etd_A}
    \textbf{Input}: Two persistence diagrams $\mathsf P_1, \mathsf P_2$, and angle set $A$ and a power value $p$.\\
    \textbf{Parameter}: Optionally $\widetilde P_1, \widetilde P_2$ can be precomputed. \\
    \textbf{Output}: The $ETD_A$ pseudodistance between $\mathsf P_1, \mathsf P_2$ considering $A$.
    
    \begin{algorithmic}[1]
    \STATE $E \leftarrow ()$
    
    \IF{ $\widetilde P_1 = \emptyset$ }
        \STATE $\widetilde P_1^j \leftarrow \{((b+d)/2, (b+d)/2):\ (b,d)\in \mathsf P_1^j\}$
    \ENDIF
    \IF{ $\widetilde P_2 = \emptyset$ }
        \STATE $\widetilde P_2^j \leftarrow \{((b+d)/2, (b+d)/2):\ (b,d)\in \mathsf P_2^j\}$
    \ENDIF
    \STATE $\mathsf P_1.extend(\widetilde P_2)$ \COMMENT{dimension-wise concatenation}
    \STATE $\mathsf P_2.extend(\widetilde P_1)$
    
    \FOR{$\alpha \in A$}
    \STATE $V_1 \leftarrow get\_V\_vector(\mathsf P_1, \alpha)$
    \STATE $V_2 \leftarrow get\_V\_vector(\mathsf P_2, \alpha)$
    \STATE $etd_\alpha \leftarrow \norm{V_1 - V_2}_p^p$
    \STATE $E.add( etd_\alpha)$
    \ENDFOR
    \RETURN $\norm{E}_p$
    \end{algorithmic}
\end{algorithm}

Algorithm \ref{alg:get_basic_etd} and Algorithm \ref{alg:get_etd_A} depends on the implementation of $L, V$ vectors and projections to the diagonal. We used a naive implementation of these methods in our experiments, but these results can be enhanced using more sophisticated libraries Numpy, PyTorch, TensorFlow, GPU kernels or python multiprocessing libraries. To demonstrate this potential, we have developed a prototype implementation utilizing Numpy \cite{harris2020array}, which achieves considerable speed ups in computing these functions, and therefore the ETDs. 

\begin{multicols}{2}
    
\begin{lstlisting}
import numpy as np
    
def get_L_vector(PD):
    longevity = []
    hgroups = len(PD)
    for d in range(hgroups):        
        # persistence computation
        dlongevity = np.diff(PD[d], axis=1).T
        # numpy descending sorting
        dlongevity[0][::-1].sort()

        longevity.append(dlongevity[0])

    return longevity

def get_V_vector(PD, alpha):
    if alpha is None:
        return ExtendedTopologyDistanceHelper.get_L_vector(PD)
    
    alpha_interval = np.array([np.cos(alpha), np.sin(alpha)]) 
    longevity = []
    hgroups = len(PD)

    for d in range(hgroups):
        dlongevity = PD[d] @ alpha_interval
        dlongevity[::-1].sort()
        longevity.append(dlongevity)

    return longevity
    
def get_maximum_sizes(PDs):
    max_sizes = {}
    for PD in PDs:
        for j, PDj in enumerate(PD):
            hgcard = len(PDj)
            if d not in max_sizes: # missing dimension
                max_sizes.update({d: hgcard})
            if max_sizes[d] < hgcard:
                max_sizes[d] = hgcard
    return max_sizes
        
def padding_zeroes(PD1, PD2):
    max_sizes = get_maximum_sizes([PD1, PD2])

    AA = []
    BB = []
    siA = len(A)
    siB = len(B)

    for d in max_sizes:
        padded = np.zeros(max_sizes[d])
        padded2 = np.zeros(max_sizes[d])
        if siA > d:
            padded[:len(A[d])] = A[d]
        if siB > d:
            padded2[:len(B[d])] = B[d]

        AA.append(padded)
        BB.append(padded2)

    return AA, BB

def get_projection(PD):
    mvect = []
    max_dim = len(PD)
    for d in range(max_dim):
        mid = np.sum(PD[d], axis = 1)*0.5

        mvect.append(np.vstack((mid, mid)).T)

    return mvect

\end{lstlisting}
\end{multicols}

In \cite{harris2020array} there is an exposition on the diverse Numpy capabilities for array programming (including a) data structures, b) indexing, c) vectorization, d) broadcasting and e) reductions (illustrated in Figure 1 of \cite{harris2020array}). Its foundational role underpins numerous numerical computation libraries, and its influence extends to the core of Machine Learning and Deep Learning tools such as PyTorch and TensorFlow, which adopt a Numpy-like syntax. This compatibility ensures a seamless transition of code between Numpy and these frameworks, enabling almost direct mapping of implementations.\par


The benefits of adopting these Numpy-enhanced functions are evident in our extended experiments, which are documented with the `np' prefix in Table \ref{tab:knn_time} within Appendix \ref{app:supervised}, and in Table \ref{tab:autoencoder} found in Appendix \ref{app:autoencoder}. The source code of this paper is available in \href{http://github.com/rolan2kn/aaai2024_etd_src}{http://github.com/rolan2kn/aaai2024\_etd\_src.}

\section{Extended Supervised Learning experiment}\label{app:supervised}

We extends our experiments of Section \ref{scc:exp1} on Supervised Learning to Shrec07 \cite{datasetShrec07} and Fashion-MNIST \cite{xiao2017fashionmnist}. 

\subsection{Dataset Description}

The aim of the Shrec07 dataset \cite{datasetShrec07} was to provide a common benchmark for the evaluation of the effectiveness of 3D-shape retrieval algorithms. The dataset consists of 400 OFF format files, encoding meshes of 3D surfaces without defective holes or gaps, subdivided into 19 classes of 20 elements each, shown in Figure \ref{fig:shrec2007}. Note that the meshes have highly varying sizes, even within the same class. As explained in Section \ref{sec:Dresults}, we use Shrec07 in a supervised learning task by splitting the dataset into train and test sets. \par


Fashion-MNIST \cite{xiao2017fashionmnist} is a dataset consisting of 28x28 grayscale article images, labelled according to 10 possible classes, and origined from a fashion retailer's article images. The dataset  consists of a training set of 60000 examples and a test set of 10000 examples. \par

\begin{figure}
     \centering
    \begin{subfigure}[t]{0.49\textwidth}
        \includegraphics[width=\textwidth]{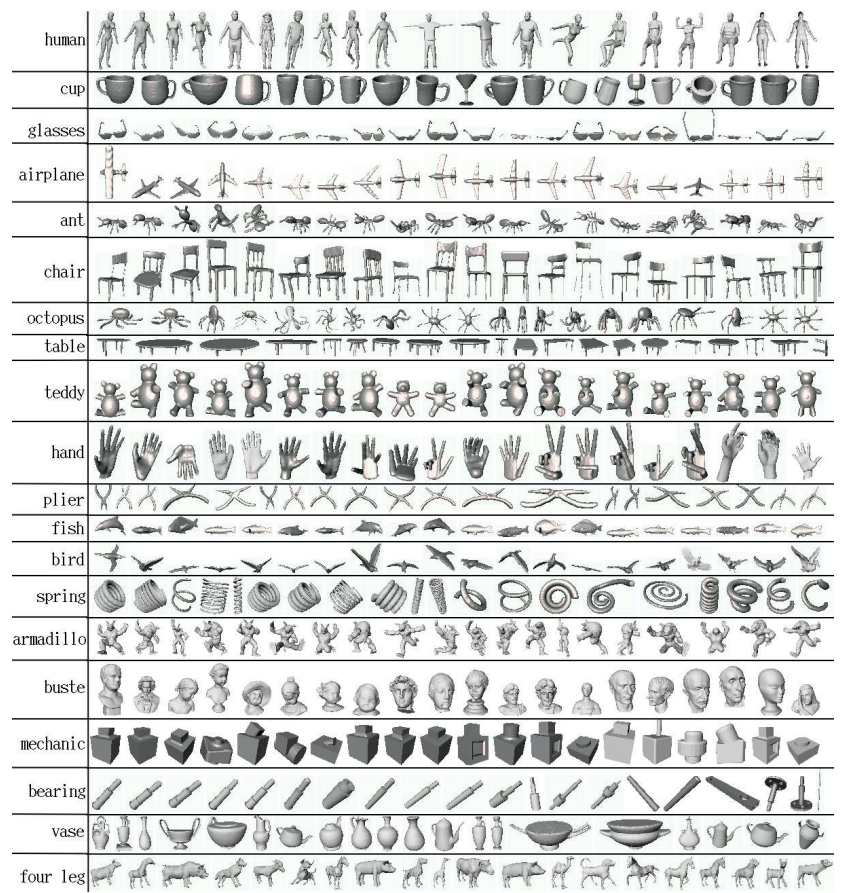}
        \caption{The Shrec'07 dataset from  \cite{datasetShrec07}. Note that compared to the original case, the "spring" class (row 7 from the bottom) was removed in our experiment, as the examples in this class are not topologically consistent.}
    \label{fig:shrec2007}
    \end{subfigure}
    \hfill
    \begin{subfigure}[t]{0.49\textwidth}
        \includegraphics[width=\textwidth]{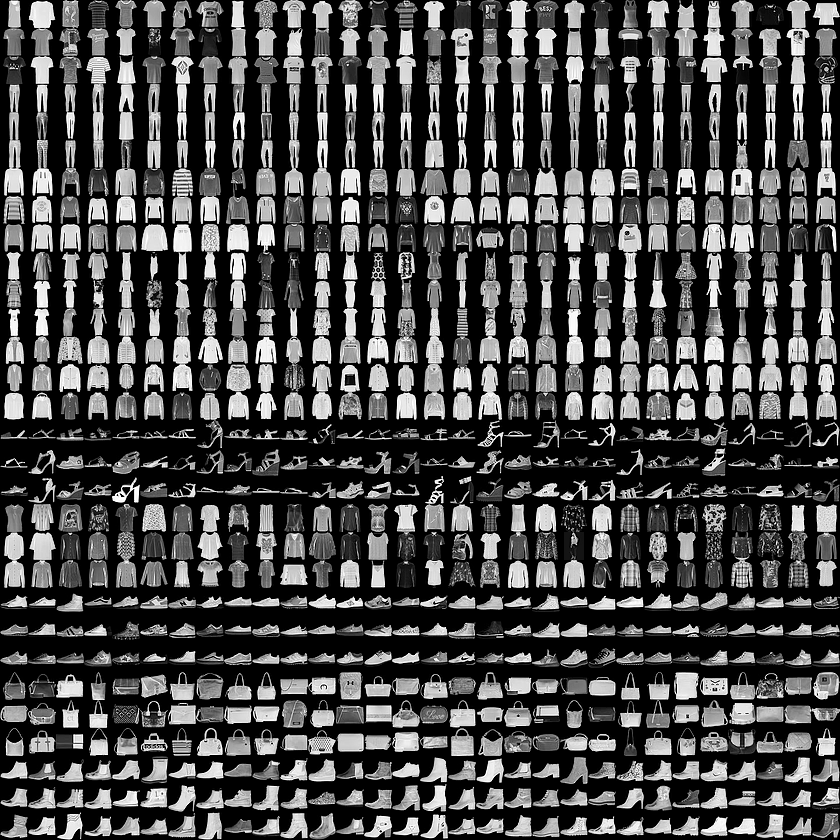}
        \caption{The fashion mnist dataset divided into class examples.}
    \label{fig:fashion}
    \end{subfigure}
    
    \caption{Datasets}\label{fig:alldatasets}
\end{figure}

\subsection{Data preprocessing}

For the Shrec07 dataset, we compute persistence diagrams on each mesh-sample and create a new dataset composed with obtained the persistence diagrams. Then, using our proposed distances and kernels, we assess the classifiers. Each Shrec07 example is a mesh object stored in an OFF (Object File Format) file, from which we extract the 3D points ignoring edges and faces. Then we compute a 3D Heat Kernel Signature (HKS) \cite{Zobel2011GeneralizedHK} and produce a 6-dimensional embedding by concatenating HKS coordinates with mesh vertex coordinates. Then we build an associated sparse Rips complex up to $H_2$. In order to get possibly better results, one could have considered a filtration via different HKS's depending of a parameter value: we did not do this due to time constraints, as it would be computationally intensive.\par
For the Fashion-MNIST setting, we adopt a preprocessing stage where on every image we apply the Histogram of Oriented Gradients (HOG) operator \cite{hog2005}. The process includes global image normalization, computing the gradient image in row and col (8 orientations), computing gradient histograms (8x8 pixels per block), normalizing across blocks (2x2 cells per block), and flattening image $i$ into a feature vector $i_f$. Finally the obtained feature vector is normalized as $i_s = \frac{i_f - mean(i_f)}{\sigma(i_f)}$.   


\subsection{Results}\label{sec:Dresults}

We formulate the Shrec07 dataset classification task as follows. We randomly select 15 examples per class, for a total of 285 examples, and for the test set we sample 10 of the 95 remaining files.\par 
On Fashion-MNIST dataset, we use 1000 samples as train set (100 per class) and again 10 samples as test set. Table \ref{tab:eknn_acc} shows the accuracy obtained with these tasks. Table \ref{tab:knn_time} shows the average time in milliseconds that takes compute one of such distances. We use a laptop Acer Aspire A315-42 with 32 GB RAM, and an AMD Radeon vega 10 graphics. 

\begin{table}[H]
\centering
\begin{tabular}{|c|cc|cc|}
\hline
\multirow{2}{*}{\textbf{Distance ($H_1$)}} & \multicolumn{2}{c|}{\textbf{Shrec07}}               & \multicolumn{2}{c|}{\textbf{Fashion-MNIST}}          \\ \cline{2-5} 
                                   & \multicolumn{1}{c|}{\textbf{Accuracy}} & \textbf{k} & \multicolumn{1}{c|}{\textbf{Accuracy}} & \textbf{k} \\ \hline
$WD$           & \multicolumn{1}{c|}{0.9}  & 1-6,10,13 & \multicolumn{1}{c|}{0.6} & 8,10,14-17,22,23 \\ \hline
$Hera\_WD$     & \multicolumn{1}{c|}{0.5}  &           & \multicolumn{1}{c|}{0.7} & 6,10             \\ \hline
$SWD_{a=1}$    & \multicolumn{1}{c|}{0.6} & 2-4       & \multicolumn{1}{c|}{0.4} & 1,2,8-29         \\ \hline
$SWD_{a=2}$    & \multicolumn{1}{c|}{0.6} & 2-4       & \multicolumn{1}{c|}{0.5} & 4,11,            \\ \hline
$SWD_{a=4}$    & \multicolumn{1}{c|}{0.6} & 2,9,10    & \multicolumn{1}{c|}{0.6} & 2,3,5,8-11,      \\ \hline
$SWD_{a=8}$    & \multicolumn{1}{c|}{0.6} & 2,6-14    & \multicolumn{1}{c|}{0.7} & 6                \\ \hline
$SWD_{a=16}$   & \multicolumn{1}{c|}{0.5} & 1-4,9,10,12  & \multicolumn{1}{c|}{0.6} & 3,6              \\ \hline
$ETD_{A_1}$    & \multicolumn{1}{c|}{0.8} & 1-4       & \multicolumn{1}{c|}{0.7} & 4,5              \\ \hline
$ETD_{A_2}$    & \multicolumn{1}{c|}{0.8} & 1-4       & \multicolumn{1}{c|}{0.7} & 4,5,             \\ \hline
$ETD_{A_4}$    & \multicolumn{1}{c|}{0.8} & 1,7,8     & \multicolumn{1}{c|}{0.7} & 8-29             \\ \hline
$ETD_{A_8}$    & \multicolumn{1}{c|}{0.9} & 2,7,8       & \multicolumn{1}{c|}{0.7} & 6,10-13,16-29    \\ \hline
$ETD_{A_{16}}$ & \multicolumn{1}{c|}{0.9} & 7,8    & \multicolumn{1}{c|}{0.6} & 8-29            \\ \hline
$PS$           & \multicolumn{1}{c|}{0.4} & 3,4       & \multicolumn{1}{c|}{0.5} & 9,12,13,21,29 \\ \hline
Fisher Kernel
& \multicolumn{1}{c|}{0.5}    &  5,8,10         & \multicolumn{1}{c|}{0.4} & 24-41            \\ \hline
\end{tabular}
\caption{Accuracy of datasets using KNN with the respective k values. }
\label{tab:eknn_acc}
\end{table}

\begin{table}[H]
\centering
\begin{tabular}{|c|cl|}
\hline
\multirow{3}{*}{\textbf{Distance}} & \multicolumn{2}{c|}{\textbf{Time in milliseconds}}            \\ \cline{2-3} 
               & \multicolumn{2}{c|}{\textbf{Supervised Learning}} \\ \cline{2-3} 
                                   & \multicolumn{1}{c|}{\textbf{Shrec07}} & \textbf{Fashion-MNIST} \\ \hline
$WD$           & \multicolumn{1}{c|}{305774.22}     & 4794.9476    \\ \hline
$Hera\_WD$     & \multicolumn{1}{c|}{46372.22}      & 4042.86      \\ \hline
$SWD_{a=1}$    & \multicolumn{1}{c|}{127.26}        & 17.74        \\ \hline
$SWD_{a=2}$    & \multicolumn{1}{c|}{199.33}        & 386.58       \\ \hline
$SWD_{a=4}$    & \multicolumn{1}{c|}{331.42}        & 48.54        \\ \hline
$SWD_{a=8}$    & \multicolumn{1}{c|}{617.25}        & 90.59        \\ \hline
$SWD_{a=16}$   & \multicolumn{1}{c|}{1184.80}       & 173.90       \\ \hline
$ETD_{A_1}$    & \multicolumn{1}{c|}{12.34}         & 1.48         \\ \hline
$ETD_{A_2}$    & \multicolumn{1}{c|}{132.19}        & 24.68        \\ \hline
$ETD_{A_4}$    & \multicolumn{1}{c|}{251.71}        & 46.78        \\ \hline
$ETD_{A_8}$    & \multicolumn{1}{c|}{490.11}        & 90.79        \\ \hline
$ETD_{A_{16}}$ & \multicolumn{1}{c|}{967.94}       & 178.66       \\ \hline
$npETD_{A_1}$    & \multicolumn{1}{c|}{0.77}         & 0.06         \\ \hline
$npETD_{A_2}$    & \multicolumn{1}{c|}{8.36}        & 0.62        \\ \hline
$npETD_{A_4}$    & \multicolumn{1}{c|}{17.67}        & 1.10        \\ \hline
$npETD_{A_8}$    & \multicolumn{1}{c|}{35.97}        & 2.14        \\ \hline
$npETD_{A_{16}}$ & \multicolumn{1}{c|}{72.95}       & 4.24       \\ \hline
$PS$           & \multicolumn{1}{c|}{42.3}         & 8.30         \\ \hline
Fisher Kernel   & \multicolumn{1}{c|}{12724.06}              & 454.71       \\ \hline
\end{tabular}
\caption{Average times in milliseconds of a single distance computation in $H_1$.}
\label{tab:knn_time}
\end{table}

As future work, in the case of Fashion-MNIST dataset, we plan to explore more sophisticated frameworks for classifying grayscale images such as \cite{Bergomi2019} and \cite{garin2019topological}. Both works apply different cubical filtrations to images to generate a wide range of topological features that leads to similar accuracy to more complex scenarios but with much lower number of features \cite{Bergomi2019}. These works were focused on applying vectorization methods included in \cite{2022VectorizationSurvey}, which prevents them to be a perfect match for our tasks since we are operating directly on the persistence diagrams considering different distances. However, in \cite{garin2019topological}, the authors use Wasserstein and Bottleneck amplitude, computed by calculating the $L_2$ norm of the persistence intervals, that is the L2-norm of unsorted vector $L$, multiplied by $\sqrt{2}/2$, similar to the $3\frac{3\pi}{4}$ projection) which is strictly less discriminative than our ETD, suggesting that we could obtain similar or better results. 

\section{Extended autoencoder experiment}\label{app:autoencoder}

The concentric spheres are manifolds of dimension 99, and thus if densely sampled we expect to find 2 components at 0-dimensional homology level, and trivial homology in dimensions 1 and 2. As we use a sparse random sampling by of 2000 points, the induced topological errors are nontrivial. In Figures \ref{fig:relu} and \ref{fig:lrelu}, we illustrate the latent spaces captured by our autoencoder and further elaborate on the Persistence Diagrams (PDs) for each layer, calculating up to the second homology group, $H_2$. These point clouds, being considerably dense, These point clouds, being considerably dense, made the calculation of persistence diagrams in dimensions greater than 1 a challenge, requiring the use of complex dispersed Rips. This was achieved by setting the 0.75 quantile of the Manhattan distance matrix as the upper limit for edge length, employing a sparsity factor of 0.5, and performing edge collapses primarily on the 1-skeleton before expanding to higher dimensions up to dimension $3$ to be sure we capture topological features in $H_2$. Additionally, in Table \ref{tab:autoencoder}, we present the average computation times, expressed in milliseconds, for calculating a single distance. This includes a comparative analysis of the average times for both the $SWD$ and the $ETD$, utilizing an identical count of projections.


\begin{figure}[H]
    \centering
    \includegraphics[width=1\columnwidth]{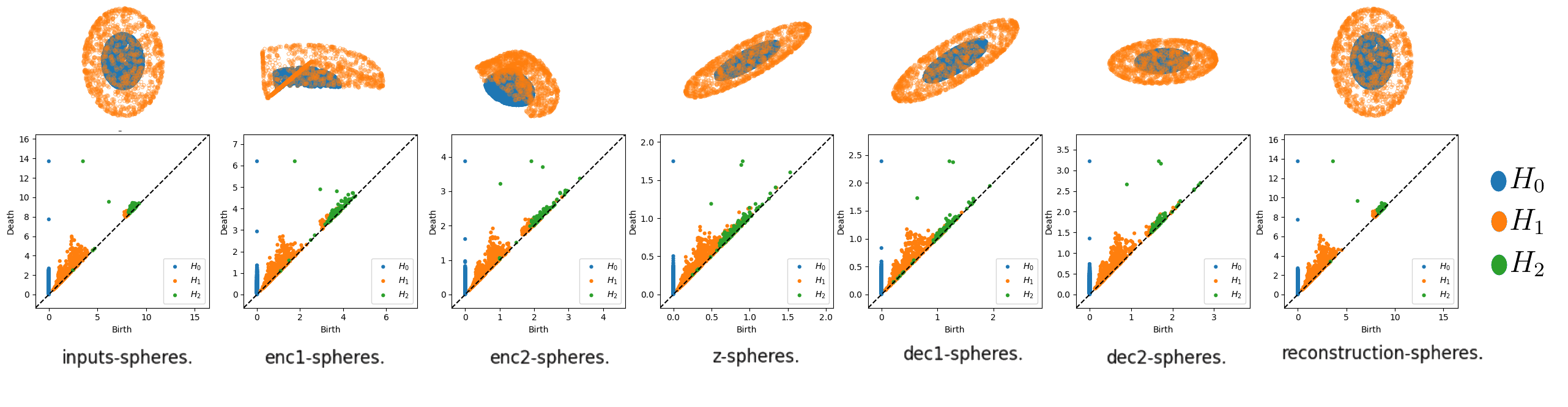}
    \caption{The Relu latent space information and persistence diagrams up to $H_2$.}
    \label{fig:relu}
\end{figure}

\begin{figure}[H]
    \centering
    \includegraphics[width=1\columnwidth]{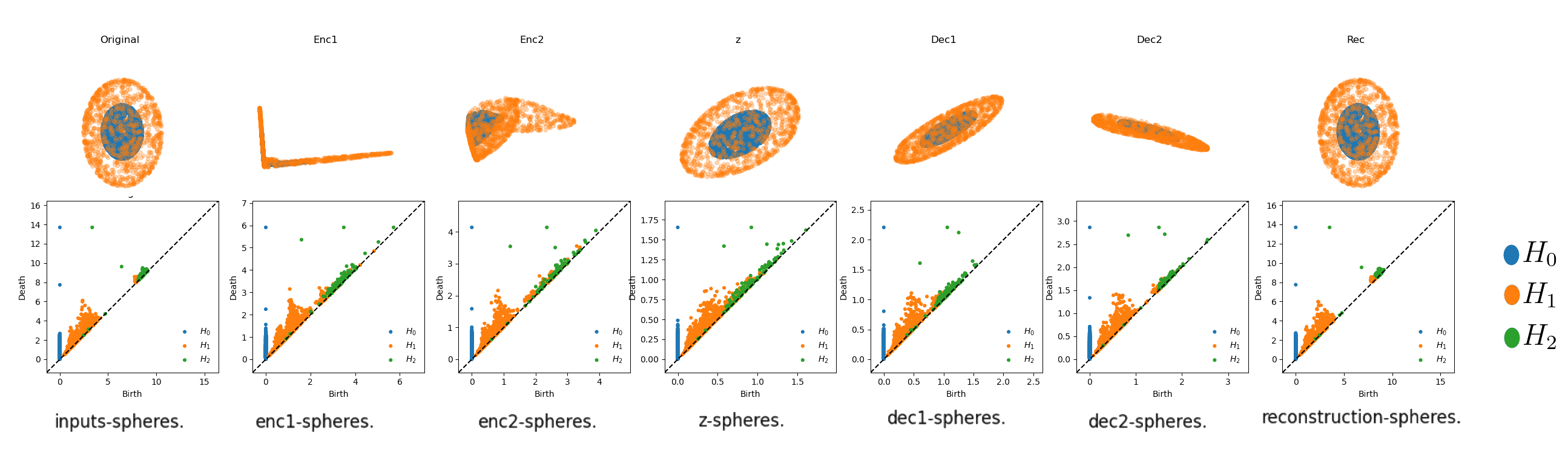}
    \caption{The LRelu latent space information and persistence diagrams up to $H_2$.}
    \label{fig:lrelu}
\end{figure}

\begin{figure}[H]
    \centering
    \includegraphics[width=1\textwidth]{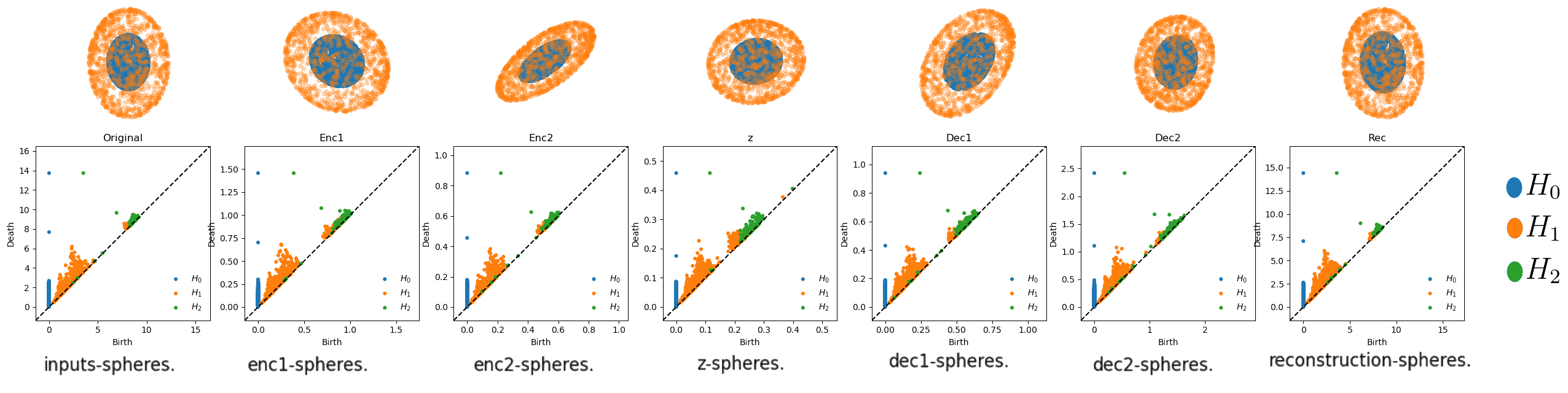}
    \caption{The Tanh latent space information and persistence diagrams up to $H_2$.}
    \label{fig:tanh}
\end{figure}


\begin{table}[H]
\centering
\begin{tabular}{|c|cc|l|}
\hline
\multirow{3}{*}{\textbf{Distance}} & \multicolumn{3}{c|}{\textbf{Time in milliseconds}}        \\ \cline{2-4} 
                                   & \multicolumn{3}{c|}{\textbf{Autoencoder Weight Topology}} \\ \cline{2-4} 
               & \multicolumn{1}{c|}{\textbf{RELU}} & \multicolumn{1}{c|}{\textbf{LRELU}} & \textbf{Tanh} \\ \hline
$WD$           & \multicolumn{1}{c|}{20343.06}      & 25488.97       &16354.78\\ \hline
$Hera\_WD$     & \multicolumn{1}{c|}{10484.34}      & 13395.04       &8896.01 \\ \hline
$SWD_{a=1}$    & \multicolumn{1}{c|}{70.33}         & 97.41          &67.34\\ \hline
$SWD_{a=2}$    & \multicolumn{1}{c|}{119.31}        & 160.05         &104.52\\ \hline
$SWD_{a=4}$    & \multicolumn{1}{c|}{201.45}        & 257.90         &186.33\\ \hline
$SWD_{a=8}$    & \multicolumn{1}{c|}{388.60}        & 519.23         &347.73\\ \hline
$SWD_{a=16}$   & \multicolumn{1}{c|}{706.65}        & 956.24         &655.33\\ \hline
$ETD_{A_1}$    & \multicolumn{1}{c|}{6.00}          & 4.89           &3.85\\ \hline
$ETD_{A_2}$    & \multicolumn{1}{c|}{78.75}         & 87.12          &125.15\\ \hline
$ETD_{A_4}$    & \multicolumn{1}{c|}{142.59.45}     & 165.19         &136.57\\ \hline
$ETD_{A_8}$    & \multicolumn{1}{c|}{306.30}        & 344.08         &272.24\\ \hline
$ETD_{A_{16}}$ & \multicolumn{1}{c|}{630.02}        & 732.73         &570.31\\ \hline
$npETD_{A_1}$    & \multicolumn{1}{c|}{3.46}        & 4.77           &4.70\\ \hline
$npETD_{A_2}$    & \multicolumn{1}{c|}{2.00}        & 2.78           &1.75\\ \hline
$npETD_{A_4}$    & \multicolumn{1}{c|}{3.11}        & 4.27           &2.71\\ \hline
$npETD_{A_8}$    & \multicolumn{1}{c|}{6.19}        & 8.44           &5.33 \\ \hline
$npETD_{A_{16}}$ & \multicolumn{1}{c|}{12.43}       & 16.91          &9.71\\ \hline
$PS$           & \multicolumn{1}{c|}{28.58}         & 40.84          &27.94\\ \hline
Fisher Kernel       & \multicolumn{1}{c|}{3450.52}  & 3526.64        &3056.73\\ \hline
\end{tabular}
\caption{Average time in milliseconds of computing a single distance in $H_0, H_1, H_2$, spanned by activation function and by datasets on the autoencoder experiment.}
\label{tab:autoencoder}
\end{table}
\end{document}